\definecolor{ochre}{HTML}{C06000}
\DeclarePairedDelimiterX{\infdivx}[2]{(}{)}{%
	#1\;\delimsize|\delimsize|\;#2%
}
\definecolor{mygreen}{rgb}{0,0.6,0}
\definecolor{mygray}{rgb}{0.5,0.5,0.5}
\definecolor{mymauve}{rgb}{0.58,0,0.82}
\ttfamily\color{mygreen}\bfseries,
\DeclareMathOperator*{\argmax}{argmax}
\newcommand{\ignore}[1]{}
\newcommand{\defeq}{\stackrel{\text{def}}{=}}
\newcommand*{\rom}[1]{\expandafter\@slowromancap\romannumeral #1@}
\newcommand{\vo}{\vec{o}\@ifnextchar{^}{\,}{}}
\colorlet{lightgray}{gray!20}
\newcommand{\RNum}[1]{\uppercase\expandafter{\romannumeral #1\relax}}
\newcommand{\proj}[1]{{\Pi}}
\newcommand{\sel}[1]{{\sigma}}
\newcommand{\cut}[1]{}
\newcommand{\eat}[1]{}
\newtheorem{theorem}{Theorem}
\newtheorem{lemma}[theorem]{Lemma}
\newtheorem{corollary}[theorem]{Corollary}
\newtheorem{definition}[theorem]{Definition}
\newcommand{\setof}[2]{\{{#1}\mid{#2}\}}        
\newcommand{\R}{\mathbb R} 
\newcommand{\extend}[1]{\bm #1}
\newcommand{\opt}{{\mathcal{OPT}}}
\begin{document}
\title{Quasi-stable Coloring for Graph Compression}
\subtitle{Approximating Max-Flow, Linear Programs, and Centrality}

\newcommand\vldbavailabilityurl{https://github.com/mkyl/QuasiStableColors.jl}

\author{Moe Kayali}
\orcid{0000-0002-0643-6468}
\affiliation{
    \institution{University of Washington}
   \country{}
    }
\email{kayali@cs.washington.edu}

\author{Dan Suciu}
\orcid{0000-0002-4144-0868}
\affiliation{
    \institution{University of Washington}
    \country{}
    }
\email{suciu@cs.washington.edu}

\makeatletter
\let\@authorsaddresses\@empty
\makeatother

\begin{abstract}
  We propose \textit{quasi-stable coloring}, an approximate version of
  stable coloring. Stable coloring, also called color refinement, is a
  well-studied technique in graph theory for classifying vertices,
   which can be used to build
  compact, lossless representations of graphs. However, its usefulness
  is limited due to its reliance on strict symmetries.  Real data
  compresses very poorly using color refinement. We propose the first,
  to our knowledge, approximate color refinement scheme, which we call
  quasi-stable coloring. By using approximation, we alleviate the need
  for strict symmetry, and allow for a tradeoff between the degree of
  compression and the accuracy of the representation.  We study three
  applications: Linear Programming, Max-Flow, and Betweenness
  Centrality, and provide theoretical evidence in each case that a
  quasi-stable coloring can lead to good approximations on the reduced
  graph.  Next, we consider how to compute a maximal quasi-stable coloring: we
  prove that, in general, this problem is NP-hard, and propose a
  simple, yet effective algorithm based on heuristics.  Finally, we
  evaluate experimentally the quasi-stable coloring technique on
  several real graphs and applications, comparing with prior approximation techniques.
\end{abstract}

\maketitle

\ifdefempty{\vldbavailabilityurl}{}{
\begingroup\small\noindent\raggedright\textbf{Artifact Availability:}\\
The source code, data, and/or other artifacts have been made available at \url{\vldbavailabilityurl}.
\endgroup
}

\setcounter{page}{1}

\section{Introduction}
\label{sec:intro}
A well known technique for finding structure in large graphs is the
{\em color refinement}, or {\em 1-dimensional Weisfeiler-Leman}
method.  It consists of assigning the nodes in the graph some initial
color, for example based on their labels.  Then one repeatedly refines the
coloring, by assigning distinct colors to two nodes whenever those
nodes have a different number of neighbors of the same color; when no
more refinement is possible, then this is called a {\em stable
  coloring}.  We show a simple illustration in Fig.~\ref{fig:example}
(a): for example nodes 5 and 11 have the same dark purple color, because both have
one purple, one lavender, and one dark purple neighbor, while node 7 has a different
color because it additionally has an olive neighbor.  The stable coloring
can be computed efficiently, in almost linear
time~\cite{DBLP:journals/siamcomp/PaigeT87}, can be generalized to
labeled graphs, weighted graphs, directed or undirected graphs, and
multigraphs, and is used by graph isomorphism
algorithms~\cite{10.1145/3372123}, in graph
kernels~\cite{DBLP:conf/nips/ShervashidzeB09}, and more recently
has been used to explain and enhance the power of Graph Neural
Networks~\cite{DBLP:conf/iclr/XuHLJ19,DBLP:conf/aaai/0001RFHLRG19,DBLP:conf/pods/Grohe20,DBLP:conf/ijcai/0001FK21}.
We review stable coloring in Sec.~\ref{sec:background}.  Excellent
surveys of color refinement and its recent applications to machine
learning can be found
in~\cite{DBLP:journals/corr/abs-2112-09992,DBLP:conf/pods/Grohe20,DBLP:conf/lics/Grohe21}.

\begin{figure}
  \begin{subfigure}[t]{\linewidth}
    \includegraphics[width=\linewidth]{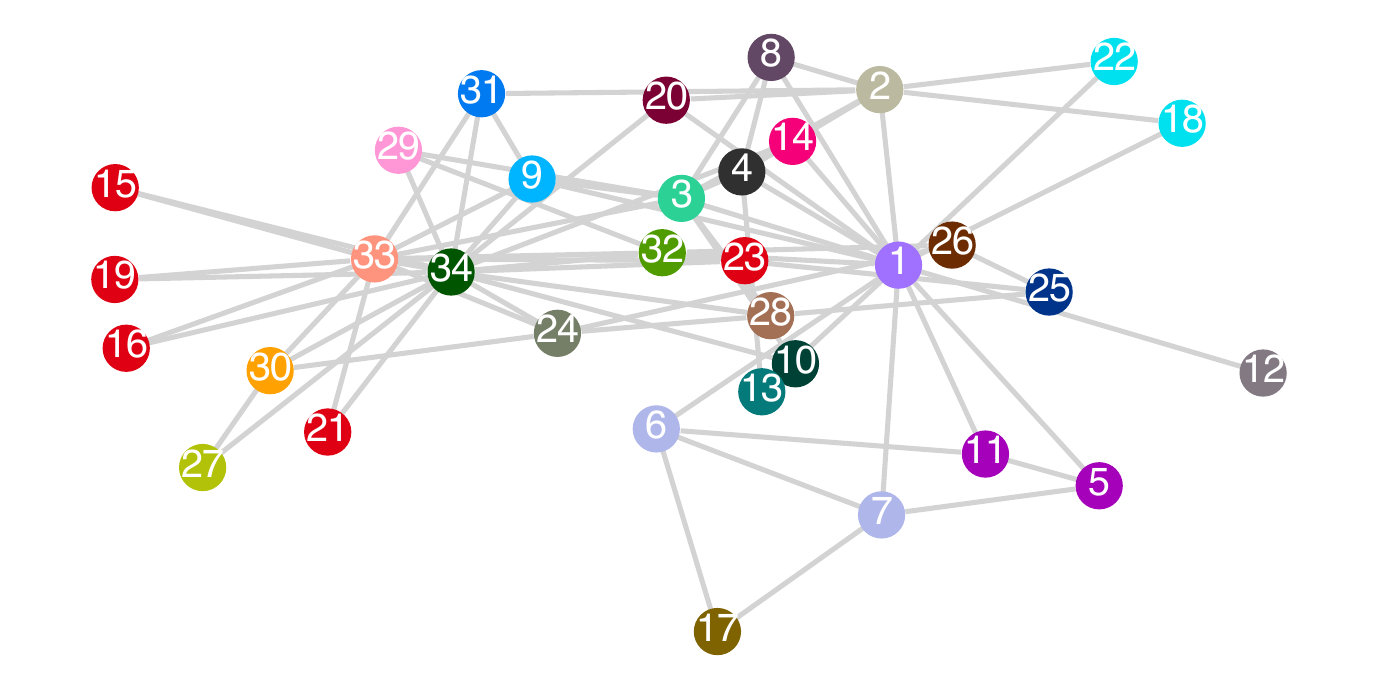}
    \caption{Stable coloring}
  \end{subfigure}
  \begin{subfigure}[b]{\linewidth}
    \includegraphics[width=\linewidth]{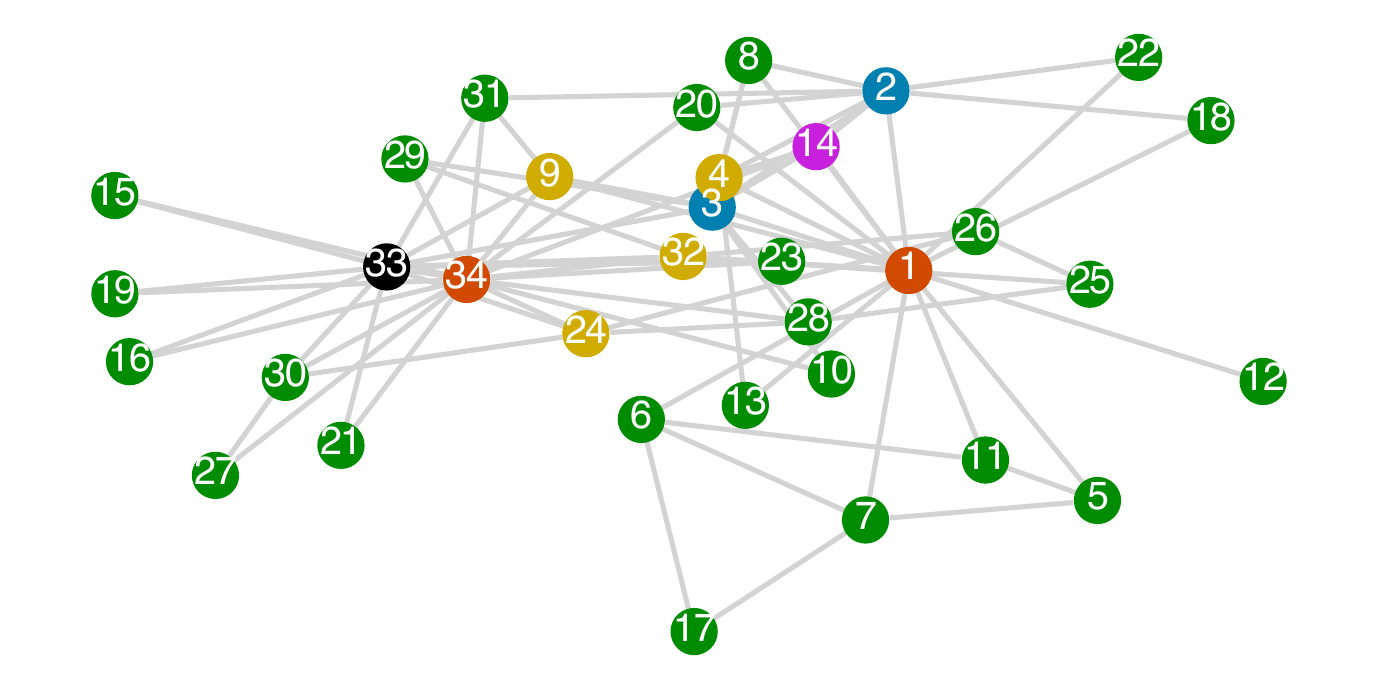}
    \caption{Quasi-stable coloring, $q=3$}
    \label{fig:karate-q-stable}
  \end{subfigure}
  \caption{Coloring Zachary's karate club
  graph~\cite{zachary1977information} where
   $|V|=34$, $|E|=78$. While the stable coloring requires $27$ colors, 
  for a quasi-stable color $6$ colors suffice when
  $q=3$. Note in~\ref{fig:karate-q-stable} the club leaders $\set{1, 34}$ are put into their
  own color.}
  \label{fig:example}
\end{figure}

In this paper we apply stable coloring as a compression technique for
large graphs.  A stable coloring naturally defines a reduced graph,
whose nodes are the classes of colors of the original graph.  The new
graph preserves many important properties of the original graph, which
makes it a good candidate for compression.  For example, an elegant
theoretical result states that the reduced graph satisfies
precisely the same properties expressible in the $C^2$ logic as the
original graph~\cite{grohe2017descriptive}.  Motivated by the fact
that the reduced graph preserves key properties of the original graph,
Grohe et al.~\cite{grohe2014dimension,Grohe:2021aa} propose using
color refinement as a dimensionality reduction technique, and show
two applications: to Linear Programming and to graph kernels.

However, we notice that stable coloring is not effective for
dimensionality reduction because, in practice, the ``reduced'' graph
is only slightly smaller than the original graph.  For example, the
graph in Fig~\ref{fig:example} (a) has 34 nodes, but 27 colors, which
means that the reduced graph has 80\% of the number of nodes of the
original.  We show in Sec.~\ref{sec:evaluation} that, for typical
large graphs, the size of the reduced graph is between 70\% - 80\%
that of the original graph.  Moreover, even when a graph happens to have a small
reduced graph, any tiny update, e.g. adding or deleting an edge, will
immediately lead to a huge increase of the reduced graph.  We
illustrate this phenomenon briefly in
Fig.~\ref{fig:graphs-stable-coloring}: we started with an
artificially regular graph, which compressed well from 1000 nodes to
only 100 colors, but the compression degrades very rapidly when we add
only a few edges.

\begin{figure}
  \centering  
    \includegraphics[width=\linewidth]{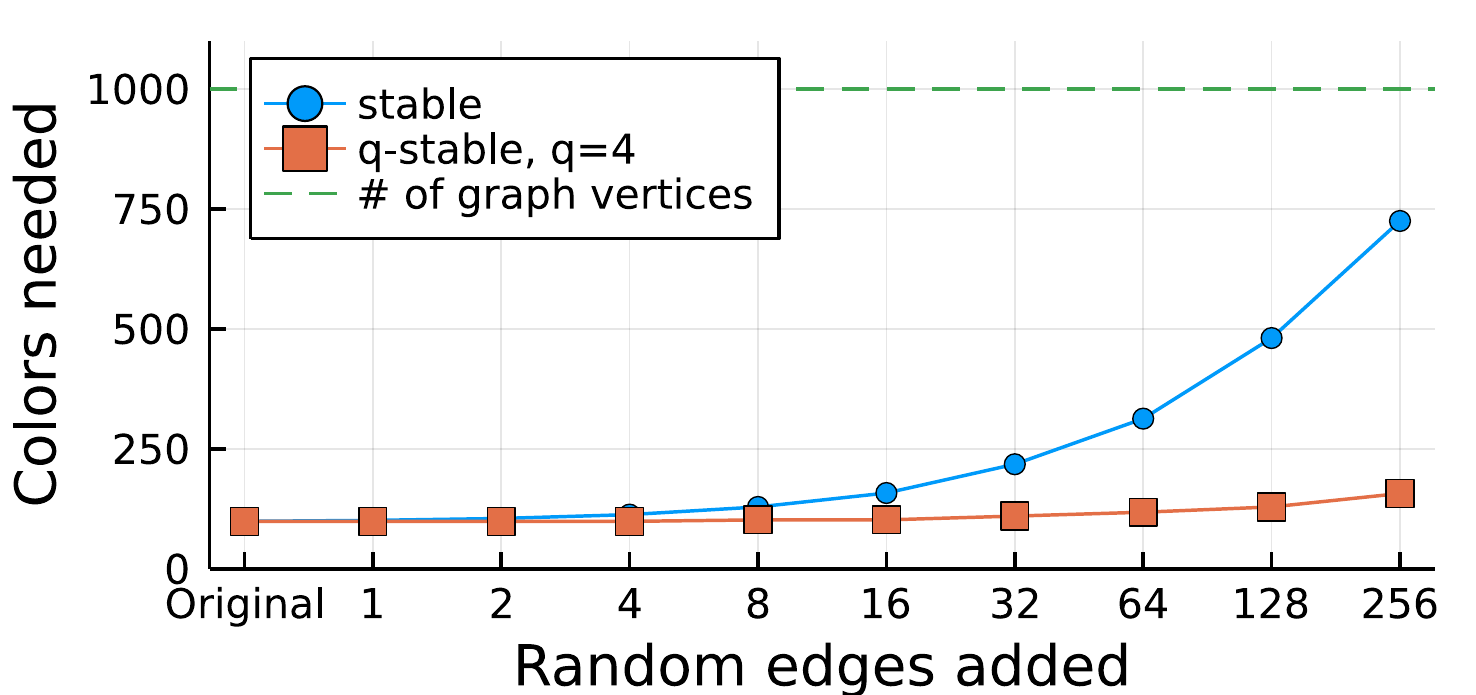}
    \caption{Comparing the robustness of stable and q-stable coloring. A
    synthetic graph with $|V| = 1~000, |E|=21~600$ is generated with a size $100$ 
    stable and q-stable coloring. As a small fraction of edges (no
    more than $1.5\%$) are
    added, the brittleness of stable coloring causes compression to
    degenerate---quasi-stable colors are immune to this.}
  \label{fig:graphs-stable-coloring}
\end{figure}

In this paper we propose a generalization of stable coloring
to {\em quasi-stable coloring}, with a goal of reducing the size of
the compressed graph while still allowing applications to be
processed approximatively.  Our definition of $q$-stable coloring,
given in Sec.~\ref{sec:approximate}, allows two nodes to be in the
same color if they have a number of neighbors to any another color
that differs by at most $q$, where $q \geq 0$ is a parameter.  This has
dual effect.  First, it can dramatically reduce the size of the
compressed graph, since more nodes can now be assigned the same color.
For example, the quasi-stable coloring in Fig.~\ref{fig:example} (b)
allows nodes 5 and 7 to have the same color, even though their number
of green neighbors differs by 1; the new compressed graph has only 6
nodes. Second, this makes the technique less sensitive to
data updates, because it can tolerate nodes with a slightly
different number of neighbors.  We can see in
Fig.~\ref{fig:graphs-stable-coloring} that the number of
quasi-stable colors increases only marginally with the addition of random
edges.  By varying the parameter $q$, the quasi-stable colors offer
a tradeoff between the compression ratio and the degree to which the
reduced graph preserves the properties of the original graph.

We start by investigating in Sec.~\ref{sec:applications} whether the
result of an application over the reduced graph is a good
approximation of the result on the original graph.  We consider three
applications: linear programming, max flow, and betweenness
centrality.  In all three cases we provide a theoretical justification
for why the result on the reduced graph should be close to the true
value.  First, for linear programs, we prove that the optimal value of
the reduced program converges to the optimal value of the original
program when $q\rightarrow 0$.  This generalizes the result
in~\cite{grohe2014dimension}, which proved that, when the coloring is
stable ($q=0$) then the LP has the same optimal value on reduced
program and the original program.  Next, for the max flow problem we
prove that, while pathological cases exist where a ``good''
quasi-stable coloring has a totally different max-flow than the
original graph, under reasonable assumptions the two are close and, in
particular, when the coloring is stable ($q=0$) then they are
equal. Third, we examine betweenness centrality, and show that, even a
stable coloring can, in pathological cases, lead to different
centrality scores, but we prove that the 2-WL method (a refinement of
1-WL) always preserves the centrality score.

Next, in Section~\ref{sec:algo} we study the algorithmic problem of
efficiently computing a quasi-stable coloring for a graph.  While the
stable coloring can be computed in almost linear time, we prove,
rather surprisingly, that finding an optimal quasi-stable coloring is
NP-hard.  The main difference is that there is always a maximum,
``best'', stable coloring, but none exist in general for quasi-stable
coloring.  Based on this observation, we propose a heuristic-based
algorithm for computing a quasi-stable coloring, whose decisions are
informed by our theoretical analysis in Sec.~\ref{sec:applications}.

Finally, we conduct an empirical evaluation of quasi-stable coloring
in Section~\ref{sec:evaluation}. Testing on twenty datasets from a variety
of domains, we find q-stable colorings favorably trade-off accuracy
for speed. For example, on the \texttt{qap15} linear program, an exact
solution takes 22~minutes to compute while the q-stable approximation 
reduces the problem size by $100\times$, solving it end-to-end within
17~seconds while introducing
only a 5\% error. We observe similar trends for max-flow and
centrality applications. Next, we study the characteristics of the
compressed graphs, finding that they avoid the pitfalls of stable
colorings. We conclude by characterizing our algorithm, analyzing its
runtime, its ability to progressively improve the approximations and
 testing its robustness to noise.

In summary, we make the following contributions in this paper:
\begin{itemize}
\item We propose a relaxation of stable coloring, called quasi-stable
  coloring; Sec.~\ref{sec:approximate}.
\item We provide theoretical evidence that the reduced graph defined
  by a quasi-stable coloring can be useful in three classes of
  applications; Sec.~\ref{sec:applications}.
\item We prove that an optimal quasi-stable coloring is NP-hard to
  compute, and propose an efficient, heuristic-based algorithm;
  Sec.~\ref{sec:algo}.
\item We conduct an experimental evaluation on several real graphs and
  applications; Sec.~\ref{sec:evaluation}.
\end{itemize}
\vspace{-1em}
\section{Background on Color Refinement}

\label{sec:background}

Fix an undirected graph $G = (V,E)$.  We denote by $N(x)$ the set of
neighbors of a node $x \in V$.  A {\em coloring} of $G$ is a partition
of $V$ into $k$ disjoint sets, $V = P_1 \cup \cdots \cup P_k$. We say
that a node $x \in P_i$ has color $i$, or that it has color $P_i$.  We
denote the coloring by $\bm P = \set{P_1, \ldots, P_k}$.  A {\em
  stable coloring} is a coloring with the property that, for any two
colors, all nodes in the first color have the same number of neighbors
in the second color.  Formally:
\vspace{-0.275em}
\[ \forall i, j, \forall x, y \in P_i: \ \ |N(x) \cap P_j|=|N(y) \cap P_j| \]
Given two colorings $\bm P, \bm P'$ we say that the first is a
refinement of the second, and denote this by $\bm P \subseteq \bm P'$,
if for every color $P_i \in \bm P$, there exists a color
$P_j' \in \bm P'$ such that $P_i \subseteq P_j'$.  Any two colorings
$\bm P, \bm P'$ have a greatest lower bound, $\bm P \wedge \bm P'$,
and a least upper bound, $\bm P \vee \bm P'$.  The greatest lower
bound is easily constructed, by considering the partition
$\setof{P_i \cap P_j'}{P_i \in \bm P, P_j' \in \bm P'}$; for a
construction of $\bm P \vee \bm P'$, we refer the reader
to~\cite{MR2868112}.

The smallest coloring, where each node $x$ is in a separate color,
denoted $\bm P_\bot$, is trivially a stable coloring.  Somewhat less
obvious is the fact that, if both $\bm P$ and $\bm P'$ are stable
colorings, then their least upper bound $\bm P \vee \bm P'$ is also a
stable coloring (see also Th.~\ref{th:lub} below).  This implies that
every graph has a unique, maximum stable coloring, often called
\underline{\em the} stable coloring of $G$, namely
$\bm P_1 \vee \bm P_2 \vee \cdots$, where $\bm P_1, \bm P_2, \ldots$
are all stable colorings of the graph.  The stable coloring can be
computed quite efficiently using the {\em color refinement method},
sometimes also called the {\em 1-dimensional Weifeiler-Leman method}
(1WL).  Start by coloring all nodes with the same color, then
repeatedly choose two colors $P_i, P_j$ and refine the set $P_i$ by
partitioning its nodes based on their number of neighbors in $P_j$;
the stable coloring is obtained when no more refinement is possible.
There exist improved algorithms that compute the stable coloring in
time $O(n + m \log n)$, where $n,m$ are the number of nodes and edges
respectively~\cite{DBLP:journals/corr/abs-2112-09992}.  The {\em
  reduced graph}, $\hat G$, has one node $i$ for each color $P_i$, and
an edge from $i$ to $j$ if some node $x \in P_i$ has a neighbor
$y \in P_j$ (in which case {\em every} node $x \in P_j$ has a neighbor
$y \in P_j$).

Color refinement can be generalized to directed graphs, to labeled
graphs, to multi-graphs, and to weighted graphs.  We refer the reader
to~\cite{DBLP:journals/corr/abs-2112-09992} for an extensive survey of
the theoretical properties of the color refinement method.  In this
paper we will consider directed, weighted graphs, but defer their
discussion to Sec.~\ref{sec:approximate}.

Most applications of stable coloring work best on graphs that have
{\em many} colors, i.e. where there are many, small sets $P_i$.  For
example, in order to check for an isomorphism between two graphs
$G, G'$, one first computes the stable coloring of the disjoint union
of $G$ and $G'$, then restricts the isomorphisms candidates to
functions that preserve the color of the nodes.  The best case is when
the stable coloring is $\bm P_\bot$, because then the only possible
isomorphism is the function that maps $x\in G$ to the similarly
colored $y \in G'$.  In general, applications of the 1WL method work
best when there are many colors.

In this paper we use coloring for dimensionality reduction and
approximate query processing.  Instead of solving the problem on the
original, large graph $G$, we solve it on the reduced graph $\hat G$.
This technique works best when there are {\em few} colors, because
then the reduced graph is small. Real graphs tend to have many colors; we found (see
Sec.~\ref{sec:evaluation}) that the number of colors is typically
around 70\% of the number of nodes.  This observation has a theoretical
justification: if $G$ is a random graph, then with high probability
its stable coloring is
$\bm P_\bot$~\cite[Sec.3.3]{DBLP:journals/corr/abs-2112-09992}.  This motivated
us to introduce a new notion, {\em quasi-stable coloring}, which
relaxes the stability condition, in order to allow us to construct
fewer, larger colors.

\section{Quasi-stable Coloring}

\label{sec:approximate}

We have seen that the stable coloring of a graph has many elegant
properties, but offers poor compression in practice.  In this section
we introduce a relaxed notion, which preserves some of the desired
properties while improving the compression.

A {\em weighted} directed graph $G=(X,E,w)$ is a directed graph with a
function $w$ mapping edges to real numbers.  We will assume that the
edge $(x,y)$ exists iff $w(x,y)\neq 0$, and therefore we often omit
$E$ and simply write the directed graph as $G=(X,w)$.  Conversely,
given a standard graph $G=(X,E)$, we assume a default weight function
$w(x,y)=1$ when $(x,y) \in E$ and $w(x,y)=0$ otherwise.  A {\em
  bipartite graph} is a graph where the nodes consist of two sets
$X,Y$ and all edges go from some node in $X$ to some node in $Y$.  We
denote a bipartite graph by $(X,Y,E)$ or $(X,Y,w)$ if it is weighted.

Given a weighted graph $(X,w)$ and two subsets of nodes $U, V$, we
denote by $w(U,V)$ the total weight from $U$ to $V$:
\begin{align}
  w(U,V) \defeq & \sum_{x \in U, y \in V} w(x,y)
\label{eq:plus:minus}
\end{align}

Fix a reflexive and symmetric relation $\sim$ on $\R$.

\begin{definition} \label{def:quasi:stable} (1) Let $G = (X,Y, w)$ be
  a weighted, bipartite graph (i.e. $w : X \times Y \rightarrow
  \R$). We say that $G$ is {\em $\sim$regular} if the following two
  conditions hold:
  \begin{align*}
    \forall x_1, x_2 \in X: \ w(x_1,Y)\sim & w(x_2,Y) \\
    \forall y_1, y_2 \in Y: \ w(X,y_1)\sim & w(X,y_2)
  \end{align*}
  (2) Let $G=(X,w)$ be any weighted, directed graph, and
  $\bm P= \set{P_1, \ldots, P_k}$ be a partition of its nodes.  We say
  that $\bm P$ is {\em $\sim$quasi-stable}, or {\em quasi-stable
    w.r.t. $\sim$}, if, for any two colors $P_i, P_j$ (including
  $i=j$), the bipartite graph $(P_i, P_j, w)$ is $\sim$regular.
\end{definition}

Thus, a quasi-stable coloring partitions the nodes in such a way that
for any two colors $P_i, P_j$, any two nodes in $P_i$ have similar
(according to $\sim$) outgoing weights to $P_j$, and any two nodes in
$P_j$ have similar incoming weights from $P_i$.

\subsection{Examples}
\label{sec:coloring-types}

We illustrate with several examples.

{\bf Biregular Graphs, and Stable Coloring} Recall that a bipartite
graph $(X,Y,E)$ is $(a,b)$-biregular, or simply biregular when $a,b$
are clear from the context, if every node $x \in X$ has outdegree $a$
and every node in $y \in Y$ has indegree $b$.  Let $\sim$ be the
equality relation on $\R$: $u \sim v$ iff $u = v$.  Then $(X,Y,E)$ is
$=$regular iff it is biregular.  Furthermore, if $G$ is a directed
graph, then a coloring $\bm P$ is $=$quasi-stable iff it is stable.

{\bf $q$-Stable Coloring} The main type of quasi-stable coloring that
we use in this paper is called $q$-stable.  Fix some number
$q \geq 0$, and define the following similarity relation on $\R$:
$u \sim_q v$ if $|u - v| \leq q$.  Notice that $\sim_q$ is reflexive
and symmetric, but not transitive.  In a $\sim_q$regular
bipartite graph any two nodes in $X$ have outgoing weights that differ
by at most $q$, and similarly for the incoming weights of the nodes in
$Y$.  To reduce clutter, we will call a $\sim_q$quasi-stable coloring
 a {\em $q$-stable coloring}, or simply a {\em $q$-coloring}.
The standard stable coloring is the special case when $q=0$.

{\bf $\varepsilon$-Relative Coloring} While $q$-stable coloring
imposes a bound on the absolute error, we briefly discuss an
alternative: imposing a bound on the relative error.  Fix some number
$\varepsilon \geq 0$, and define $u \sim^\varepsilon v$ as
$u \cdot e^{-\varepsilon} \leq v\leq u \cdot e^{\varepsilon}$.  This
relation is reflexive and symmetric, but not transitive.  We call a
$\sim^\varepsilon$quasi-stable coloring simply a {\em
  $\varepsilon$-relative coloring}.  Notice that isolated nodes
(i.e. without incoming or outgoing edges) are in a separate color.
This is because zero is similar only to itself: $u \sim^\varepsilon 0$
implies $u=0$.  More generally, for any two colors, either every node
in the first color is connected to some node in the second, or none
is, and similarly with the role of the two colors reversed.

{\bf Bisimulation Relation} As a last example, define $u \equiv v$ as
$u=v=0$ or $u\neq 0, v\neq 0$.  In other words, $\equiv$ checks if
both $u, v$ are zero, or none is zero.  This is an equivalence
relation.  Then, a $\equiv$quasi-stable coloring is a {\em
  bisimulation} relation on that graph~\cite{grohe2017descriptive}.

\subsection{The Reduced Graph}

Let $G=(X,w)$ be a directed, weighted graph and let
$\bm P = \set{P_1, \ldots, P_k}$ be any coloring, not necessarily
quasi-stable.  The {\em reduced graph}, is defined as
$\hat G \defeq (\hat X, \hat w)$, where the nodes
$\hat X \defeq \set{1,2,\ldots, k}$ correspond to the colors.  We will
consider different choices for the weight function; one example is
that we can set it to be the sum of all weights between two colors,
i.e.  $\hat w(i, j) \defeq \sum_{x \in P_i; y \in P_j} w(x,y)$, but we
will consider other options too.  Our goal in this paper is to use the
reduced graph to compute approximate answers to problems that are
expensive to compute on the original, large graph.

\section{Applications}
\label{sec:applications}

Stable coloring preserves many nice properties of the graph.  Will a
$q$-quasi stable coloring preserve such properties to some degree?  We
explore this question here, and provide theoretical evidence that
quasi stable colorings provide some useful approximations for three
problems: linear optimization, maximum flow, and betweenness
centrality.  In Section~\ref{sec:evaluation} we validate
experimentally these findings.

\subsection{Linear Optimization}

\label{sec:linear:optimization}

We start with Linear Optimization.  Consider the following linear
program:
\begin{align}
  \text{maximize } & c^T x  & \text{where } & A x \leq b \text{ and } x \geq 0 \label{eq:sys:1}
\end{align}
where $A \in \R^{m \times n}$, $b \in \R^m$, $c \in R^n$.  We will
denote by $\opt(A, b, c)$ the optimal value of $c^T x$.  In general,
it is possible to have $\opt=-\infty$ (namely when the set of
constraints is infeasible), or $\opt=\infty$, but we will not consider
these cases.  We will apply quasi-stable coloring only to LPs that are
{\em well behaved}, meaning that $\opt(A,b,c)$ is finite, and
continues to be finite when $b,c$ range over some small neighborhood.

In this section, we view a matrix $A$ as a function $A(i,j)$.
Following the notation in~\eqref{eq:plus:minus}, we denote
$A(P,Q) \defeq \sum_{i \in P, j \in Q} A(i,j)$ when
$P \subseteq [m], Q \subseteq [n]$, and similarly,
$b(P) \defeq \sum_{i \in P} b(i)$, $c(Q)=\sum_{j \in Q}c(j)$.  We
write {\em boldface $\bm A$} for the extended matrix of the LP:
\begin{align}
  \extend{A} \defeq & \left(
                      \begin{array}{ c |c}
                      A & b \\ \hline
                      c^T & \infty
                    \end{array}
\right) \label{eq:a:ext}
\end{align}
The last row, $m+1$, is the vector $(c^T, \infty)$, and the last
column, $n+1$, is the vector $(b, \infty)$.  We associate the LP with
the weighted bipartite graph $G = ([m+1],[n+1],\extend{A})$, where the
weights are the matrix entries (they may be $<0$).

Consider a coloring $(\bm P, \bm Q)$ of the bipartite graph $G$; it
partitions the $[m+1]$ rows into $P_1, \ldots, P_k, P_{k+1}$, and the
$[n+1]$ columns into $Q_1, \ldots, Q_\ell, Q_{\ell+1}$.  We further
assume that the last row and last column of $\extend{A}$ have a unique
color, namely $P_{k+1} = \set{m+1}$, and $Q_{\ell+1} = \set{n+1}$.
The partition defines a reduced bipartite graph,
$\hat G = ([k+1], [\ell+1], \hat{\extend{A}})$, where we define the
weights as follows:
\begin{align}
  \hat{\extend{A}}(r,s) \defeq & \frac{\extend{A}(P_r,Q_s)}{\sqrt{|P_r|\cdot |Q_s|}}\label{eq:hat:a:def}
\end{align}
In other words, the weight of the edge from color $r$ to color $s$ is
the sum of all $\extend{A}_{ij}$ with $i$ in color $P_r$ and $j$ in
color $Q_s$, normalized by $\sqrt{|P_r|\cdot|Q_s|}$.  The {\em reduced
  LP} is the LP defined by the matrix $\hat{\extend{A}}$ of the
reduced graph.  In other words, the reduced LP is the following:
\begin{align}
  \text{maximize } & \hat c^T \hat x &  \text{where } & \hat A \hat x \leq \hat b  \text{ and } \hat x \geq 0 \label{eq:sys:2} 
\end{align}
where $\hat A, \hat b, \hat c$ are defined as:
\begin{align}
\hat A(r,s) \defeq & \frac{A(P_r,Q_s)}{\sqrt{|P_r|\cdot |Q_s|}} & \hat b(r) \defeq &  \frac{b(P_r)}{\sqrt{|P_r|}} & \hat c(s) \defeq & \frac{c(Q_s)}{\sqrt{|Q_s|}}
\label{eq:def:hat:a:b:c}
\end{align}

We prove that, if the coloring is quasi-stable, then the solution to
problem~\eqref{eq:sys:1} is close to that of problem~\eqref{eq:sys:2}.

\begin{theorem} \label{th:approx:lp} Assume that the LP defined by
  $A, b, c$ is well behaved.  Then there exists $q_0>0$ that depends
  only on $A, b, c$, such that, for all $q \leq q_0$, for any
  $q$-quasi stable coloring,
  $|\opt(A,b,c)-\opt(\hat A,\hat b, \hat c)| \leq q \Delta$, where
  $\hat A, \hat b, \hat c$ is the reduced LP associated to the
  coloring, and the constant $\Delta$ depends only on $A,b,c$.
\end{theorem}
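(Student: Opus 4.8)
The plan is to compare the two optima through an explicit \emph{lift} and \emph{project} correspondence between feasible points, and then absorb the resulting $O(q)$ defects using the stability of a well-behaved LP's optimum under small perturbations of $b$ and $c$. Define the lift of a reduced point $\hat x$ by $x(j)\defeq \hat x(s)/\sqrt{|Q_s|}$ for $j\in Q_s$, and the projection of an original point $x$ by $\hat x(s)\defeq x(Q_s)/\sqrt{|Q_s|}$; both preserve nonnegativity. Expanding $c^Tx$ and $\hat c^T\hat x$ and using that $c(Q_s)=\sum_{j\in Q_s}c(j)$, the lift preserves the objective exactly, whereas for the projection the two objectives differ only because $c$ need not be constant on a color class. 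This is where the coloring enters: since the last row $m+1$ forms its own color $P_{k+1}$, $\sim_q$-regularity of the bipartite graph $(P_{k+1},Q_s,\extend A)$ forces $|c(j)-c(j')|\le q$ for $j,j'\in Q_s$; symmetrically the last column forces $|b(i)-b(i')|\le q$ whenever $i,i'$ share a row color. Writing each entry as its color-average plus a deviation bounded by $q$, I obtain $|c^Tx-\hat c^T\hat x|\le q\|x\|_1$.

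The feasibility analysis is analogous. For the projection I sum the original constraints $Ax\le b$ over each row color $P_r$; the aggregated left-hand side equals $\sum_j A(P_r,j)\,x(j)$, and replacing $A(P_r,j)$ by its column-average $A(P_r,Q_s)/|Q_s|$ costs at most $q\|x\|_1$ by column-regularity of $(P_r,Q_s,\extend A)$. Dividing by $\sqrt{|P_r|}$ recovers the reduced constraints, so the projected $\hat x$ satisfies $\hat A\hat x\le \hat b+\hat\epsilon$ with $\|\hat\epsilon\|_\infty\le q\|x\|_1$. For the lift I evaluate $(Ax)(i)$ for $i\in P_r$, use row-regularity $|A(i,Q_s)-A(i',Q_s)|\le q$ to replace $A(i,Q_s)$ by $A(P_r,Q_s)/|P_r|$ at cost $q\|x\|_1$, recognize the main term as $\frac{1}{\sqrt{|P_r|}}\sum_s\hat A(r,s)\hat x(s)\le b(P_r)/|P_r|$ from reduced feasibility, and finally compare $b(P_r)/|P_r|$ against $b(i)$ using $|b(i)-b(i')|\le q$. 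This yields $Ax\le b+\epsilon$ with $\|\epsilon\|_\infty\le q(1+\|x\|_1)$.

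To convert approximate feasibility into a bound on optima I invoke LP sensitivity: for a well-behaved LP the optimal value is locally Lipschitz in $(b,c)$, with constant $L$ governed by the (bounded) optimal dual. Taking $x^\ast$ optimal for the original LP~\eqref{eq:sys:1} and projecting gives a point feasible for $(\hat A,\hat b+\hat\epsilon,\hat c)$ with value $\ge \opt(A,b,c)-q\|x^\ast\|_1$, whence, after correcting the $\hat\epsilon$-perturbation via sensitivity of the reduced LP~\eqref{eq:sys:2}, $\opt(\hat A,\hat b,\hat c)\ge \opt(A,b,c)-O(q)$. Taking $\hat x^\ast$ optimal for the reduced LP and lifting gives a point feasible for $(A,b+\epsilon,c)$ with value $=\hat c^T\hat x^\ast$, whence, via sensitivity of the original LP, $\opt(A,b,c)\ge\opt(\hat A,\hat b,\hat c)-O(q)$. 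Adding the two directions gives $|\opt(A,b,c)-\opt(\hat A,\hat b,\hat c)|\le q\Delta$.

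The main obstacle is making every constant uniform over all $q$-quasi-stable colorings and guaranteeing that the perturbed programs stay well-behaved. The solution norms $\|x^\ast\|_1$ and $\|\hat x^\ast\|_1$ (choosing basic optima) must be bounded by a quantity depending only on $A,b,c$, and the Lipschitz constant of the \emph{reduced} LP must likewise be controlled even though the reduced LP changes with the coloring and with $q$; in particular I must rule out that the reduced LP becomes infeasible or unbounded. This is precisely the purpose of the well-behavedness hypothesis together with the freedom to shrink $q_0$: for $q\le q_0$ all induced perturbations of $b,c$ have size $O(q)$ and remain inside the neighborhood on which the optimum is finite and Lipschitz, so a single $q_0$ and a single $\Delta$, both depending only on $A,b,c$, work simultaneously for every $q$-quasi-stable coloring.
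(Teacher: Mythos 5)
Your proposal is correct and follows essentially the same route as the paper's proof: your lift and project maps are exactly the paper's $x=\extend{V}^T\hat x$ and $\hat x = \extend{V}x$, your $O(q)$ objective and feasibility defects are precisely the paper's error matrices $\extend{D},\extend{E}$ bounded entrywise via quasi-stability (including the observation that the singleton last row/column colors control $b$ and $c$ within each class), and your sensitivity step is the paper's Lemma~\ref{lemma:lp:lipschitz}. The only minor differences are notational (entrywise sums versus the matrices $\extend{U},\extend{V}$) and that you cite standard LP sensitivity via bounded duals where the paper proves Lipschitz continuity through basic solutions; both arguments also leave the uniformity of the constants over all colorings (well-behavedness of the reduced LP) at the same level of rigor.
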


\newcommand\y{\cellcolor{green!10}}
\newcommand\gr{\cellcolor{yellow!10}}
\newcommand\bl{\cellcolor{blue!10}}
\newcommand\rd{\cellcolor{red!10}}
\newcommand\pl{\cellcolor{cyan!10}}
\newcommand\gy{\cellcolor{black!10}}
\newcommand\pk{\cellcolor{orange!10}}
\newcommand\ma{\cellcolor{magenta!10}}

\begin{figure*}
  \begin{minipage}[c]{0.23\linewidth}
    \begin{align*}
      \text{maximize } & 9x_1+10x_2+50x_3 \\
      \text{where } & 4x_1+8x_2+2x_3 \leq 20 \\
                       & 6x_1 + 5x_2 + x_3 \leq 20 \\
                       & 7x_1 + 4x_2 + 2x_3 \leq 21 \\
                       & 3x_1 + x_2 + 22x_3 \leq 50 \\
                       & 2x_1 + 3x_2 + 21x_3 \leq 51
    \end{align*}
    \centerline{Optimal value: 128.157}
    \centerline{(a)}
  \end{minipage}
  \begin{minipage}[c]{0.45\linewidth}
    \begin{align*}
      \extend{A} =
      &
        \left(
        \begin{array}{cc|c|c}
          \y 4 & \y 8 & \gr 2 & \bl 20 \\
          \y 6 & \y 5 & \gr 1 & \bl 20 \\
          \y 7 & \y 4 & \gr 2 & \bl 21 \\
          \hline
          \rd 3 & \rd 1 & \pl 22 & \gy 50 \\
          \rd 2 & \rd 3 & \pl 21 & \gy 51 \\
          \hline
          \pk 9 & \pk 10 & \ma 50 & \infty
        \end{array}
                        \right)
                    & \hat{\extend{A}} =
            & \left(
              \begin{array}{c c|c}
                \y \frac{34}{\sqrt{3\cdot 2}}& \gr \frac{5}{\sqrt{3\cdot 1}}&\bl \frac{61}{\sqrt{3\cdot 1}}\\
                \rd \frac{9}{\sqrt{2\cdot 2}}& \pl \frac{43}{\sqrt{2\cdot 1}}& \gy \frac{101}{\sqrt{2\cdot 1}}\\
                \hline
                \pk \frac{19}{\sqrt{1\cdot 2}}& \ma \frac{50}{\sqrt{1\cdot 1}} & \infty
              \end{array}
                                                                        \right)
    \end{align*}
    \centerline{(b)}
  \end{minipage}
  \begin{minipage}[c]{0.23\linewidth}
    \begin{align*}
      \text{maximize } & \frac{19}{\sqrt{2}}\hat x_1 + 50 \hat x_2\\
      \text{where } & \frac{34}{\sqrt{2}}x_1+5x_2 \leq 61 \\
      & \frac{9}{\sqrt{2}}+43 x_2 \leq 101
    \end{align*}
    \centerline{Optimal value: 130.199}
    \centerline{(c)}
  \end{minipage}
  \caption{Example of (a) Linear Program, (b) constraint matrix reduced
  via q-stable coloring, and (c)
    the reduced Linear Program.}
  \label{fig:example:lp}
\end{figure*}

We give the proof in Appendix~\ref{app:proof:th:approx:lp}. The
theorem guarantees that, by improving the quality of the quasi-stable
coloring, the value of the reduced linear program eventually converges
to the true value.

\begin{example}
  Consider the linear program in Fig.~\ref{fig:example:lp} (a).  Its
  matrix has dimensions $5 \times 3$.  Fig.~\ref{fig:example:lp} (b)
  shows a block-parition of the extended matrix $\extend{A}$, which
  corresponds to a $q$-quasi stable coloring, for $q=1$.  More
  precisely, in each block, the row-sums differ by at most 1, and the
  column sums differ by at most 1.  For example, the three rows in the
  first block have sums $4+8=12$ and $6+5=11$ and $7+4=11$, so they
  differ by at most 1, while the column-sums are equal.  The reduced
  matrix is shown in Fig.~\ref{fig:example:lp} (c).  The optimal value
  of the original LP is 128.157 and that of the reduced LP is 130.199.
\end{example}

{\bf Discusssion} Mladenov et
al.~\cite{DBLP:journals/jmlr/MladenovAK12} and Grohe et
al.~\cite{grohe2014dimension} used stable coloring of the matrix
(which is also called there an {\em equitable partition of
  $\extend{A}$}) to reduce the dimensionality of a linear program.  We
recover their result as the special case when $q=0$: in that case, our
theorem above implies $\opt(A,b,c)=\opt(\hat A, \hat b, \hat c)$.  The
reduced LP in~\cite{grohe2014dimension} is different from ours,
however, we explain here that both are special cases of a more general
form of reduction.  To explain that, recall that a {\em fractional
  isomorphism} from $\extend{A}$ to $\hat{\extend{A}}$ (Equations
(5.1), (5.2) in~\cite{grohe2014dimension}) is a pair of {\em
  stochastic} matrices $\extend{U}, \extend{V}$ such that:
\begin{align}
\extend{A}\extend{V}^T=&\extend{U}^T\hat{\extend{A}}&
\extend{U}\extend{A}=&\hat{\extend{A}}\extend{V} \label{eq:def:u:v:grohe}
\end{align}
Our proof in Appendix~\ref{app:proof:th:approx:lp} in fact shows the
following:
\begin{theorem}[Informal]  \label{th:linear-eq:Informal}
  If $\hat{\extend{A}}$, $\extend{U}$, $\extend{V}$ are three matrices
  such that Equations~\eqref{eq:def:u:v:grohe} hold exactly, or hold
  approximatively, and $\extend{U}$, $\extend{V}$ are non-negative,
  then $\opt(A,b,c)$ and $\opt(\hat{\extend{A}},\hat b, \hat c)$ are
  equal, or are approximatively equal.
\end{theorem}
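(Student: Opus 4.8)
The plan is to prove the theorem by exhibiting two objective-preserving maps between the feasible regions of the original and the reduced LP, one in each direction, and reading off the equality (or approximate equality) of the optima. The first concrete step is to record the block structure forced by the singleton colors $P_{k+1}=\{m+1\}$ and $Q_{\ell+1}=\{n+1\}$. Because the formal entry $\infty$ of $\extend{A}$ sits alone in its own row- and column-color and is $\sim$-similar only to itself, Equations~\eqref{eq:def:u:v:grohe} cannot let it leak into any finite block: if $\extend{U}$ had a nonzero entry coupling a finite row-color to row $m+1$, the product $\extend{U}\extend{A}$ would acquire an $\infty$ in a row where $\hat{\extend{A}}\extend{V}$ stays finite, a contradiction. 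Hence $\extend{U},\extend{V}$ are block-diagonal,
\[
  \extend{U}=\begin{pmatrix}U&0\\0&1\end{pmatrix},\qquad
  \extend{V}=\begin{pmatrix}V&0\\0&1\end{pmatrix},
\]
with $U\ge 0$ of shape $k\times m$ and $V\ge 0$ of shape $\ell\times n$. Substituting this form into the two fractional-isomorphism identities and matching finite blocks yields six relations,
\[
  AV^{T}=U^{T}\hat A,\qquad U^{T}\hat b=b,\qquad \hat c=Vc,
\]
\[
  UA=\hat A V,\qquad Ub=\hat b,\qquad c=V^{T}\hat c .
\]

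For the exact case the two transfers are then short. Given any $\hat x\ge 0$ feasible for the reduced LP, set $x\defeq V^{T}\hat x$; nonnegativity of $V$ gives $x\ge 0$, the identity $AV^{T}=U^{T}\hat A$ together with $U\ge 0$ and $\hat A\hat x\le\hat b$ gives $Ax=U^{T}\hat A\hat x\le U^{T}\hat b=b$, and $\hat c=Vc$ gives $c^{T}x=\hat c^{T}\hat x$. Thus $x$ is feasible with the same objective, so $\opt(A,b,c)\ge\opt(\hat A,\hat b,\hat c)$. Symmetrically, given $x\ge 0$ feasible for the original LP, set $\hat x\defeq Vx$ and use $UA=\hat A V$, $Ub=\hat b$, $c=V^{T}\hat c$ and $U,V\ge 0$ to produce a feasible $\hat x$ with the same objective, giving the reverse inequality. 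This is exactly the place where the hypothesis that $\extend{U},\extend{V}$ are non-negative is indispensable: without it the transferred point could violate $x\ge0$ or flip an inequality when multiplied through $U^{T}$ or $U$. Combining the two bounds yields $\opt(A,b,c)=\opt(\hat A,\hat b,\hat c)$.

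For the approximate case I would run the same two maps but carry the residuals. If the identities hold only up to $\|AV^{T}-U^{T}\hat A\|\le\varepsilon$, $\|U^{T}\hat b-b\|\le\varepsilon$, and so on, then the transferred point $x=V^{T}\hat x$ satisfies $Ax\le b+\delta$ with $\|\delta\|$ bounded by a multiple of $\varepsilon\|\hat x\|$, and its objective differs from $\hat c^{T}\hat x$ by at most $O(\varepsilon\|\hat x\|)$; the reverse map behaves the same way. To convert approximate feasibility into a genuine bound on $|\opt(A,b,c)-\opt(\hat A,\hat b,\hat c)|$ I would invoke the well-behavedness assumption: for an LP whose optimum remains finite under small perturbations of $b$ and $c$, the value function is Lipschitz on a neighborhood, so a right-hand-side perturbation of magnitude $\|\delta\|$ moves the optimum by $O(\|\delta\|)$. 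This is the same stability mechanism underlying Theorem~\ref{th:approx:lp}, and instantiating it for the coloring-induced $U,V$ (whose residuals are $O(q)$) recovers the $q\Delta$ bound stated there.

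I expect the main obstacle to be this approximate half, and specifically the two quantitative estimates it rests on: a uniform a priori bound on the norm of the optimal reduced solution $\hat x$, needed to turn the equation residual $\varepsilon$ into an absolute constraint violation $\|\delta\|$, and the Lipschitz/sensitivity bound on the LP value function, which must be applied on a neighborhood that the perturbation does not exhaust. Both are precisely what the ``well-behaved'' hypothesis is designed to supply, but making them effective (and tracking the operator norms of $U,V$ through the error propagation) is the delicate part. The exact case, by contrast, is a purely algebraic consequence of the six block relations together with nonnegativity.
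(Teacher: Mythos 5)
Your proposal is correct and is essentially the paper's own argument (Appendix~\ref{sec:appendix}, proof of Theorem~\ref{th:approx:lp}): the same two transfer maps $\hat x = Vx$ and $x = V^T\hat x$, nonnegativity of $U,V$ to preserve the inequalities, the exact identities $Ub=\hat b$ and $c^TV^T=\hat c^T$, residual (error) matrices for the approximate case, and the Lipschitz/sensitivity lemma (Lemma~\ref{lemma:lp:lipschitz}) supplied by well-behavedness to absorb the $O(q)$ perturbations of $b$ and $c$. The only cosmetic difference is that you derive the block-diagonal form of $\extend{U},\extend{V}$ from the formal $\infty$ entries --- strictly this forces the corner entries only to be positive, not equal to $1$, so equality of optima holds up to that harmless normalization --- whereas the paper's concrete $\extend{U},\extend{V}$ in Eq.~\eqref{eq:def:u:v} have this form by construction.
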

Notice that we do not require $\extend{U},\extend{V}$ to be stochastic,
only non-negative. In fact, our particular choice
$\extend{U},\extend{V}$ in~\eqref{eq:def:u:v} are not stochastic.  By
using this result we derive many other choices for the reduced LP, as
follows.  Let $\extend{M}, \extend{N}$ be any diagonal matrices of
dimensions $(k+1)\times (k+1)$ and $(\ell+1)\times(\ell+1)$
respectively, where all elements on the diagonal are $>0$, and define:
\begin{align*}
  \hat{\extend{A}}' \defeq & \extend{M}\hat{\extend{A}}\extend{N}^{-1}
& \extend{U}' \defeq & \extend{M} \extend{U}
& \extend{V}' \defeq & \extend{N} \extend{V}
\end{align*}
The reader may check that equations~\eqref{eq:def:u:v:grohe} continue
to hold (exactly or approximatively) when we replace
$\hat{\extend{A}},\extend{U},\extend{V}$ with
$\hat{\extend{A}}',\extend{U}',\extend{V}'$.  Now we can explain the
construction of the matrix that defines the reduced LP
in~\cite{grohe2014dimension}.  Start from~\eqref{eq:hat:a:def} (or,
equivalently, from~\eqref{eq:def:hat:a:b:c}), and define the diagonal
matrices:
\begin{align*}
\extend{M} \defeq & \texttt{diag}(\sqrt{|P_1|},\ldots, \sqrt{|P_{k+1}|})
& \extend{N} \defeq & \texttt{diag}(\sqrt{|Q_1|},\ldots, \sqrt{|Q_{\ell+1}|})
\end{align*}
Then the new matrix $\hat{\extend{A}}'$ defines  reduced LP
in~\cite{grohe2014dimension}.  More precisely:
\begin{align*}
\hat{A}'(r,s) \defeq & A(P_r,Q_s)/|Q_s| & \hat{b}'(r) \defeq & b(P_r) &\hat{c}'(s) \defeq &c(Q_s)/|Q_s|
\end{align*}

\subsection{Maximum Flow}

Next, we consider the maximum flow problem.  We show that, while in
general quasi-stable coloring may not necessarily lead to a good
approximate solution, we describe a reasonable property under which it
does.  In particular, our result implies that stable coloring always
preserves the value of the maximum flow.

In the {\em network flow problem} we are given a network
$G = (X,c,S,T)$ where $X$ is a set of nodes,
$c : X \times X \rightarrow \R_+$ is a {\em capacity function}, and
$S, T \subseteq X$ are sets of nodes called source and target nodes.
A {\em flow} is a function $f:X \times X \rightarrow \R_+$ satisfying
the {\em capacity condition}, $f(x,y) \leq c(x,y)$,
$\forall x,y\in X$, and the {\em flow preservation condition},
$f(X,z)=f(z,X)$ for all nodes $z \not\in S \cup T$ (following the
notation~\eqref{eq:plus:minus}).  The quantities $f(X,z)$ and $f(z,X)$
are called the {\em incoming flow} and {\em outgoing flow} at the
node $z$.  The {\em value} of the flow is
$\texttt{value}(f)\defeq f(S,X) = f(X,T)$.  The problem asks for the
maximum value of a flow, which we denote by $\texttt{maxFlow}(G)$.  A
{\em cut} in the network is a set of edges\footnote{Usually the cut is
  defined as a set of nodes; in this paper we find it more convenient
  to define it as a set of edges.}  $C \subseteq X \times X$ whose
removal disconnects $S$ from $T$, and its {\em capacity} is the sum of
capacities of all its edges.  The max-flow, min-cut
theorem~\cite[Th.10.3]{schrijver-book} asserts that
$\texttt{maxFlow}(G)$ equals the minimum capacity of any cut.  Despite
significant algorithmic advances for the max-flow problem, see
e.g.~\cite{DBLP:conf/focs/Madry16}, practical algorithms are based on
the augmenting path method and remain slow in practice.  We show here
how to use the reduced graph of a quasi-stable coloring to compute an
approximate flow.  For that, we need to examine flows in bipartite
graphs.

When $G=(X,Y,c)$ is a bipartite graph, then we will assume that the
source nodes are $X$ and the target nodes are $Y$.  Obviously, the
maximum flow is the total capacity of all edges,
$\texttt{maxFlow}(G)=c(X,Y)$.  Next, we consider a restricted notion
of a flow.

\begin{definition}
  We say that a flow $f : X \times Y \rightarrow \R_+$ in a bipartite
  graph $G$ is {\em uniform} if all $\forall x_1, x_2 \in X$,
  $f(x_1,Y)=f(x_2,Y)$ and $\forall y_1,y_2 \in Y$,
  $f(X,y_1)=f(X,y_2)$; in other words, all source nodes have the same
  outgoing flow, and all target nodes have the same incoming flow.
\end{definition}

We denote by $\texttt{maxUFlow}(G)$ the max. value of a uniform
flow in $G$.

\begin{theorem} \label{th:flow:lower:upper:bound} Consider a network
  flow problem defined by $G =(X, c, \set{s}, \set{t})$, with a single
  source and a single target node, $s \neq t$.  Let
  $\bm P = \set{P_0 , P_1, \ldots, P_{k-1}, P_k}$ be any coloring,
  such that $P_0=\set{s}$ and $P_k=\set{t}$, i.e. the source and
  target nodes have their own unique colors.  Define two capacity
  functions on the reduced graph:
  \begin{align*}
    \hat c_1(i,j) \defeq & \texttt{maxUFlow}(P_i,P_j,c)
    & \hat c_2(i,j) \defeq & \texttt{maxFlow}(P_i,P_j,c)
  \end{align*}
  Let $\hat G_1, \hat G_2$ be the reduced graphs with nodes
  $\set{0,1, \ldots, k}$ and capacity functions $\hat c_1, \hat c_2$
  respectively.  Then:
  \begin{align*}
    & \texttt{maxFlow}(\hat G_1) \leq \texttt{maxFlow}(G) \leq \texttt{maxFlow}(\hat G_2)
  \end{align*}
\end{theorem}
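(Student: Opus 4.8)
The plan is to prove the two inequalities separately, each by exhibiting an explicit flow of matching value. For the upper bound $\texttt{maxFlow}(G) \leq \texttt{maxFlow}(\hat G_2)$ I would \emph{project} an arbitrary flow of $G$ down to the reduced graph; for the lower bound $\texttt{maxFlow}(\hat G_1) \leq \texttt{maxFlow}(G)$ I would \emph{lift} a flow of $\hat G_1$ back up to $G$. Throughout I write $\hat X \defeq \set{0,1,\ldots,k}$ for the reduced node set, with node $0$ the source and node $k$ the target. The key asymmetry is that projection only needs aggregate totals, so the weaker capacities $\hat c_2$ suffice, whereas lifting must reconstruct a genuine node-level flow, which is precisely where the uniformity built into $\hat c_1$ is needed.

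For the upper bound, I would take any flow $f$ in $G$ and set $\hat f(i,j) \defeq f(P_i,P_j)$. Since the max-flow of a bipartite graph equals its total capacity, $\hat c_2(i,j)=c(P_i,P_j)$, so $\hat f(i,j)=f(P_i,P_j)\leq c(P_i,P_j)=\hat c_2(i,j)$ and capacities are respected. For flow preservation at a non-terminal color $i$ (with $1\leq i\leq k-1$), every node $z\in P_i$ obeys $f(X,z)=f(z,X)$; summing over $z\in P_i$ gives $f(X,P_i)=f(P_i,X)$, which is exactly conservation of $\hat f$ at reduced node $i$. The value is unchanged because $P_0=\set{s}$: $\texttt{value}(\hat f)=\hat f(0,\hat X)=f(s,X)=\texttt{value}(f)$. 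Taking $f$ maximal then yields $\texttt{maxFlow}(G)\leq\texttt{maxFlow}(\hat G_2)$.

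For the lower bound, I would start from a flow $\hat f$ of $\hat G_1$ and reverse the construction. For each ordered color pair $(i,j)$, the bound $\hat f(i,j)\leq\hat c_1(i,j)=\texttt{maxUFlow}(P_i,P_j,c)$ together with the fact that uniform flows are closed under scaling by any factor in $[0,1]$ lets me pick a \emph{uniform} flow $g_{ij}$ in $(P_i,P_j,c)$ of value exactly $\hat f(i,j)$ (the zero flow when $\hat c_1(i,j)=0$). Defining $f(x,y)\defeq g_{ij}(x,y)$ for $x\in P_i,y\in P_j$ respects capacities since each $g_{ij}$ does. The decisive step is conservation at a node $z\in P_i$, $1\leq i\leq k-1$: by uniformity $z$ sends $\hat f(i,j)/|P_i|$ into every color $j$ and receives $\hat f(j,i)/|P_i|$ from every color $j$, so $f(z,X)=\hat f(i,\hat X)/|P_i|$ and $f(X,z)=\hat f(\hat X,i)/|P_i|$, and these agree because $\hat f$ conserves at reduced node $i$. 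An identical count at $P_0=\set{s}$, where $|P_0|=1$, gives $\texttt{value}(f)=\texttt{value}(\hat f)$, so taking $\hat f$ maximal proves $\texttt{maxFlow}(\hat G_1)\leq\texttt{maxFlow}(G)$.

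I expect the lower-bound lift to be the main obstacle: I must turn aggregate reduced-flow values back into a flow that conserves at every \emph{individual} vertex rather than merely in aggregate over a color. This is exactly why $\hat c_1$ is defined via $\texttt{maxUFlow}$ rather than $\texttt{maxFlow}$---uniformity forces every node of color $i$ to carry the equal share $\hat f(i,j)/|P_i|$, so that reduced-level conservation $\hat f(\hat X,i)=\hat f(i,\hat X)$ transfers verbatim to each $z\in P_i$; with an arbitrary (non-uniform) decomposition this step would fail. A couple of minor points to verify along the way are that self-loop contributions $\hat f(i,i)$ enter the in- and out-flow of each node of $P_i$ equally and therefore cancel, and that the scaling argument degenerates gracefully when the uniform capacity is zero.
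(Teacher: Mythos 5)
Your proposal is correct and follows essentially the same route as the paper's proof: the upper bound by aggregating an arbitrary flow of $G$ over colors (which respects $\hat c_2$ since $\hat c_2(i,j)=c(P_i,P_j)$), and the lower bound by scaling the maximum uniform flow of each bipartite graph $(P_i,P_j,c)$ by the factor $\hat f(i,j)/\hat c_1(i,j)$ so that uniformity transfers reduced-level conservation to every individual vertex. Your explicit treatment of the degenerate case $\hat c_1(i,j)=0$ and of self-loop cancellation fills in details the paper leaves implicit, but the argument is the same.
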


\begin{proof}
  The second inequality follows immediately from the fact that the
  total amount of flow from a set $P_i$ to a set $P_j$ cannot exceed
  $c(P_i,P_j) = \hat c_2(i,j)$.  We prove the first inequality.  Fix
  any flow $\hat f$ in $\hat G_1$; we show how to construct a flow $f$
  in $G$ with the same value,
  $\texttt{value}(f)=\texttt{value}(\hat f)$.  The idea is to take the
  flow $\hat f(i,j)$ between any two colors $i,j$ of the reduced
  graph, and divided it uniformly between the nodes in $P_i$ and those
  in $P_j$.  For that purpose, we use the maximal uniform flow $f'$ in
  the bipartite graph $(P_i,P_j,c)$.  Since $\hat f$ satisfies the
  capacity condition, we have
  $\hat f(i,j) \leq \hat c_1(i,j)=f'(P_i,P_j)$.  Then, we define $f$
  on the bipartite graph $P_i,P_j$ to be equal to $f'$ scaled down by
  the factor $\hat f(i,j)/f'(P_i,P_j)$.  Then
  $f(P_i,P_j)=\hat f(i,j)$.  Importantly, $f$ is a uniform flow from
  $P_i$ to $P_j$, which means that all nodes $x \in P_i$ have exactly
  the same outgoing flow to $P_j$, namely $\hat f(i,j)/|P_i|$, and
  similarly all nodes $y \in P_j$ have the same incoming flow
  $\hat f(i,j)/|P_j|$.  This allows us to prove that $f$ satisfies the
  flow preservation condition on $G$ (since $\hat f$ is a flow on
  $\hat G$), and that $\texttt{value}(f)=\texttt{value}(\hat f)$.

\end{proof}

We use theorem to approximate the flow in a network as follows.
Compute a quasi-stable coloring, construct the reduced graph, set the
capacities $\hat c_2(i,j) = \sum_{x \in P_i, y\in P_j}c(x,y)$, and use
the upper bound in the theorem as an approximate value for the
max-flow.  The quality of this approximation depends on the how far
apart $\hat c_1$ and $\hat c_2$ are.  We show below in
Corollary~\ref{cor:stable:coloring:flow} that $\hat c_1=\hat c_2$ if
the reduced graph is defined by the stable coloring.  However, if we
relax the coloring to be quasi-stable, then the upper bound can be
arbitrarily bad, as we show next.

\begin{figure}
  \centering
      \includegraphics[width=0.8\linewidth]{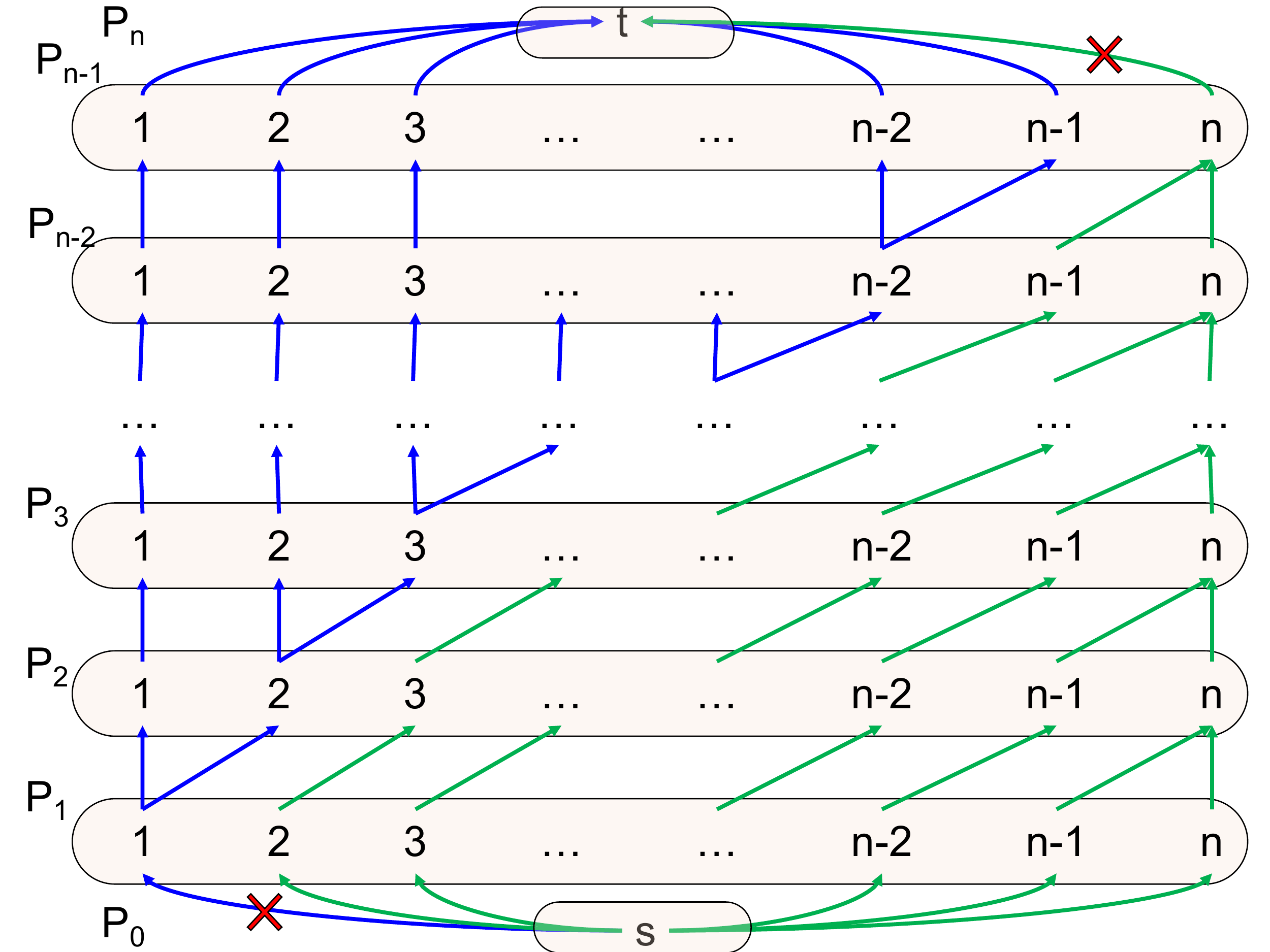}
      \caption{A network with $n^2+2$ nodes and a $q$-stable coloring
        with $q=1$.  The maximum flow is $2$, because there exists a
        cut of size 2 (the blue lower left and green upper right
        edges).  The maximum uniform flow of the bipartite graph
        induced by $P_{i-1}$ and $P_i$, for $i=2,n-1$, is 0, hence
        $\hat c_1=0$. The capacity between any two consecutive colors
        is $n$ or $n+1$ respectively, hence $\hat c_2(i-1,i) \geq n$.}
  \label{fig:uniform}
\end{figure}

\begin{example}\label{ex:no-flow}  Consider the network in
  Figure~\ref{fig:uniform}, where each edge has capacity 1.  The
  maximum flow is $2$, because there exists a cut with only two edges
  (lower left edge, and upper right edge in the figure).  The figure
  shows a coloring that is $q$-stable, for $q=1$; in other words this
  is a ``good'' quasi-stable coloring, as close as it can get to a
  stable coloring.  Let's examine the upper and lower bounds in
  Theorem~\ref{th:flow:lower:upper:bound}.  On one hand,
  $\hat c_2(i-1,i)=n+1$ for $i=2,n-1$, and
  $\hat c_2(0,1)=\hat c_2(n-1,n)=n$.  Therefore, the upper bound given
  by the theorem is $n$, which is a huge overestimate.  The reason is
  that the maximum {\em uniform} flow from $P_{i-1}$ to $P_i$ is 0.
  For example, if $f$ is a uniform flow from $P_1$ to $P_2$, then
  $f(1,1)+f(1,2)=f(2,3)$ (uniformity at nodes $1, 2 \in P_1$) and
  $f(1,1)=f(1,2)=f(2,3)$ (uniformity at nodes $1,2,3 \in P_2$), which
  implies $f=0$.
\end{example}

Despite this negative example, we show that, under some reasonable
assumptions, the two bounds in the theorem can be guaranteed to be
close:

\begin{lemma} \label{lemma:uniform} Let $G=(X,Y,c)$ be a bipartite
  graph, with capacity $c(x,y) \geq 0$.  Let $a, b > 0$ be two numbers
  such that, for all $x \in X$, $c(x,Y)\geq a$ and for all $y \in Y$,
  $c(X,y)\geq b$, and denote by
  $F \defeq \min(a \cdot |X|, b \cdot |Y|)$.  Assume that for any two
  sets of nodes $S \subseteq X$, $T \subseteq Y$, the following
  holds:\footnote{The condition is somewhat similar to Hal's marriage
    theorem~\cite[Th.16.7]{schrijver-book}.}
  \begin{align}
    c(S,T) + F \geq & a\cdot |S| + b \cdot |T| \label{eq:cond:a:b}
  \end{align}
  Then $\texttt{maxUFlow}(G)=F$.
\end{lemma}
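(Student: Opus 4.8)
The plan is to recognize $\texttt{maxUFlow}(G)$ as the optimal value of a transportation-feasibility problem and to attack it with the max-flow min-cut theorem, exactly as the footnote's reference to Hall's marriage theorem suggests. First I would record two elementary consequences of the hypothesis. Taking $S=X,\,T=\emptyset$ in~\eqref{eq:cond:a:b} gives $F\ge a\cdot|X|$ (since $c(X,\emptyset)=0$), and taking $S=\emptyset,\,T=Y$ gives $F\ge b\cdot|Y|$; together with $F=\min(a|X|,b|Y|)$ these force $a\cdot|X|=b\cdot|Y|=F$. Hence a uniform flow of value $V$ is precisely one in which every source emits $V/|X|$ and every target absorbs $V/|Y|$, and at $V=F$ these per-node rates are exactly $a$ and $b$.

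The upper bound $\texttt{maxUFlow}(G)\le F$ is the easy direction and does not even use~\eqref{eq:cond:a:b}. In any uniform flow $f$ of value $V$, each source $x$ satisfies $V/|X|=f(x,Y)\le c(x,Y)$, so evaluating at a source of least out-capacity $a$ gives $V\le a|X|$; symmetrically each target $y$ satisfies $V/|Y|=f(X,y)\le c(X,y)$, giving $V\le b|Y|$. Thus $V\le\min(a|X|,b|Y|)=F$.

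The lower bound $\texttt{maxUFlow}(G)\ge F$ is the heart of the argument, and I would obtain it by exhibiting an explicit uniform flow of value $F$. Build the auxiliary single-source/single-target network: add $\sigma$ with an edge $\sigma\to x$ of capacity $a$ for each $x\in X$, keep every edge $x\to y$ with capacity $c(x,y)$, and add $\tau$ with an edge $y\to\tau$ of capacity $b$ for each $y\in Y$. The total capacity leaving $\sigma$ is $a|X|=F$ and entering $\tau$ is $b|Y|=F$, so every flow has value $\le F$, and a flow of value exactly $F$ must saturate all of these boundary edges; its restriction to the $x\to y$ edges is then a flow with $f(x,Y)=a$ for all $x$ and $f(X,y)=b$ for all $y$, i.e. a uniform flow with $\texttt{value}(f)=F$. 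By the max-flow min-cut theorem, such a saturating flow exists iff every $\sigma$--$\tau$ cut has capacity $\ge F$.

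The decisive step is therefore a cut computation. I would parametrize a cut by the sets $S\subseteq X$ and $B\subseteq Y$ left on the $\sigma$-side, so that the crossing edges contribute $a(|X|-|S|)$ from the unsaturated $\sigma\to x$ edges, $c(S,Y\setminus B)$ from the internal edges, and $b|B|$ from the $y\to\tau$ edges, for a total cut capacity $a(|X|-|S|)+c(S,Y\setminus B)+b|B|$. Writing $T\defeq Y\setminus B$ and substituting $a|X|=b|Y|=F$, this capacity simplifies to $2F+c(S,T)-a|S|-b|T|$, so the requirement ``every cut $\ge F$'' becomes precisely $c(S,T)+F\ge a|S|+b|T|$ for all $S,T$ --- which is exactly the hypothesis~\eqref{eq:cond:a:b}. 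Hence the minimum cut is $\ge F$, the saturating (uniform) flow of value $F$ exists, and combining with the upper bound yields $\texttt{maxUFlow}(G)=F$. I expect the only delicate point to be the bookkeeping of which auxiliary edges cross a given cut together with the change of variables $B\mapsto T=Y\setminus B$ that converts the min-cut inequality into~\eqref{eq:cond:a:b}; the rest is routine.
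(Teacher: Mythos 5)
Your lower-bound argument---the substantive direction---is essentially identical to the paper's proof. The paper also extends $G$ to a single-source/single-target network (with auxiliary capacities $c(s,x)\defeq F/|X|$ and $c(y,t)\defeq F/|Y|$) and shows via max-flow/min-cut that every cut has capacity at least $F$, using exactly your bookkeeping of which boundary and internal edges cross a cut. Your preliminary observation that plugging $S=X,T=\emptyset$ and $S=\emptyset,T=Y$ into~\eqref{eq:cond:a:b} forces $a\cdot|X|=b\cdot|Y|=F$ is a nice touch the paper does not make explicit (the paper instead only needs $F/|X|\leq a$ and $F/|Y|\leq b$ in the cut estimate); given that observation, your auxiliary capacities $a,b$ coincide with the paper's $F/|X|,F/|Y|$, so the two constructions are the same, and your cut computation is correct.

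The gap is in your upper bound, and it is worth understanding because it is really a gap in the lemma as stated rather than in your approach. You argue $V\leq a|X|$ by ``evaluating at a source of least out-capacity $a$,'' but the hypothesis only says $c(x,Y)\geq a$ for every $x$; nothing guarantees $a$ is attained. If all out-capacities strictly exceed $a$, the bound fails. Concretely, take $|X|=|Y|=2$ with all four capacities equal to $100$ and $a=b=1$: then $F=2$, condition~\eqref{eq:cond:a:b} holds for every $S,T$ (the tight cases are $S=X,T=\emptyset$ and $S=\emptyset,T=Y$, both giving $2\geq 2$), yet the constant-$100$ flow is uniform with value $400$, so $\texttt{maxUFlow}(G)=400>F$. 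Thus only $\texttt{maxUFlow}(G)\geq F$ follows from the stated hypotheses. Notably, the paper's own proof establishes exactly this one inequality (it constructs a uniform flow of value $F$ and stops), and that is all that is used downstream: in Corollary~\ref{cor:stable:coloring:flow} the bipartite graphs are $(a,b)$-biregular, so $a$ and $b$ are the exact out-/in-capacities, $c(X,Y)=a|X|$ gives the trivial reverse inequality, and your upper-bound argument becomes valid. So treat your proof as correct for the direction that matters, and note that the claimed equality needs $a,b$ to be attained (as in the biregular application), a caveat the lemma statement and the paper's proof both gloss over.
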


We prove the lemma in Appendix~\ref{app:proof:lemma:uniform}.  Here,
we show an application.  A bipartite graph $G=(X,Y,c)$ is {\em
  $(a,b)$-biregular} if $c(x,Y)=a$ and $c(X,y)=b$ for all $x, y$.  We
show:

\begin{corollary} \label{cor:stable:coloring:flow}
  (1) If $G$ is an $(a,b)$-biregular graph, then
  condition~\eqref{eq:cond:a:b} holds.  (2) If $\bm P$ is stable
  coloring of a network $G$, then $\hat c_1=\hat c_2$ and the two
  bounds in Theorem~\ref{th:flow:lower:upper:bound} are equal.
\end{corollary}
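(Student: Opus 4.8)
The plan is to prove part (1) by a direct verification of inequality~\eqref{eq:cond:a:b}, and then obtain part (2) by combining part (1) with Lemma~\ref{lemma:uniform}. The whole argument rests on nothing more than the additivity of $c(\cdot,\cdot)$ over disjoint blocks together with biregularity, so I do not expect a serious obstacle; the only care needed is the blockwise bookkeeping in part (2).

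For part (1), I would first record that in an $(a,b)$-biregular graph the two ways of summing all capacities agree: $c(X,Y)=\sum_{x\in X}c(x,Y)=a|X|$ and also $c(X,Y)=\sum_{y\in Y}c(X,y)=b|Y|$, so $a|X|=b|Y|$ and hence $F=\min(a|X|,b|Y|)=c(X,Y)$. Now fix $S\subseteq X$, $T\subseteq Y$ and abbreviate the complements $S^c=X\setminus S$, $T^c=Y\setminus T$. Using $c(S,Y)=a|S|$ and $c(X,T)=b|T|$ (biregularity summed over $S$ and over $T$), I would expand
\[ a|S|+b|T|-c(S,T)=c(S,T)+c(S,T^c)+c(S^c,T). \]
Since $F=c(X,Y)=c(S,T)+c(S,T^c)+c(S^c,T)+c(S^c,T^c)$, subtracting the two displays gives $F-\big(a|S|+b|T|-c(S,T)\big)=c(S^c,T^c)\ge 0$, because capacities are nonnegative. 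This is exactly inequality~\eqref{eq:cond:a:b}, establishing part (1).

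For part (2), the key observation is that a stable coloring is precisely a $=$quasi-stable coloring, so for every pair of colors $P_i,P_j$ the bipartite block $(P_i,P_j,c)$ is $=$regular, i.e. $(a,b)$-biregular with $a=c(x,P_j)$ for any $x\in P_i$ and $b=c(P_i,y)$ for any $y\in P_j$. In this block $\texttt{maxFlow}(P_i,P_j,c)=c(P_i,P_j)=a|P_i|=b|P_j|=F$. By part (1) the block satisfies~\eqref{eq:cond:a:b}, so Lemma~\ref{lemma:uniform} yields $\texttt{maxUFlow}(P_i,P_j,c)=F=c(P_i,P_j)=\texttt{maxFlow}(P_i,P_j,c)$. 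Hence $\hat c_1(i,j)=\hat c_2(i,j)$ for every $i,j$, i.e. $\hat c_1=\hat c_2$.

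Finally, since $\hat G_1$ and $\hat G_2$ then carry identical capacity functions they are the same network, so $\texttt{maxFlow}(\hat G_1)=\texttt{maxFlow}(\hat G_2)$ and the lower and upper bounds in Theorem~\ref{th:flow:lower:upper:bound} coincide. The one point worth attention is verifying $F=c(P_i,P_j)$ in each block before invoking Lemma~\ref{lemma:uniform}, which is exactly where the biregularity supplied by stability is used.
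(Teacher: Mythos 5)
Your proof is correct and follows essentially the same route as the paper's: part (1) by decomposing $c(X,Y)$ into the four block sums $c(S,T)+c(S,Y\setminus T)+c(X\setminus S,T)+c(X\setminus S,Y\setminus T)$ so that condition~\eqref{eq:cond:a:b} reduces to $c(X\setminus S, Y\setminus T)\geq 0$, and part (2) by noting each block $(P_i,P_j,c)$ of a stable coloring is biregular and invoking Lemma~\ref{lemma:uniform} to conclude $\hat c_1=\hat c_2$. Your write-up merely makes explicit the appeal to Lemma~\ref{lemma:uniform} in part (2), which the paper leaves implicit in the chain $\texttt{maxUFlow}(P_i,P_j,c)=a\cdot|X|=c(P_i,P_j)$.
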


\begin{proof}
  (1) In a biregular graph, the quantity $F$ defined in the lemma is
  $F = \min(a\cdot |X|, b \cdot |Y|) = a \cdot |X|=b \cdot
  |Y|=c(X,Y)$, and:
  \begin{align*}
    \begin{array}{ll}
    \multicolumn{2}{l}{F=c(S,T)+c(S,Y-T)+c(X-S,T)+c(X-S,Y-T)} \\
    a\cdot |S|=c(S,T)+c(S,Y-T)&b\cdot |T| = c(S,T)+c(X-S,T)
    \end{array}
  \end{align*}
  Condition~\eqref{eq:cond:a:b} simplifies to $c(X-S,Y-S)\geq 0$,
  which is true since all edge capacities are $\geq 0$.

  (2) If $\bm P$ is a stable coloring of a network, then every
  bipartite graph $(P_i, P_j, c)$ is $(a,b)$-biregular for some $a,b$
  and, furthermore
  $\hat c_1(i,j) = \texttt{maxUFlow}(P_i,P_j,c)=a\cdot
  |X|=c(P_i,P_j)=\hat c_2(i,j)$, proving the claim.
\end{proof}

\subsection{Centrality}

\begin{figure}
  \includegraphics[trim={0 4cm 0 0.85cm},clip, width=0.8\linewidth]{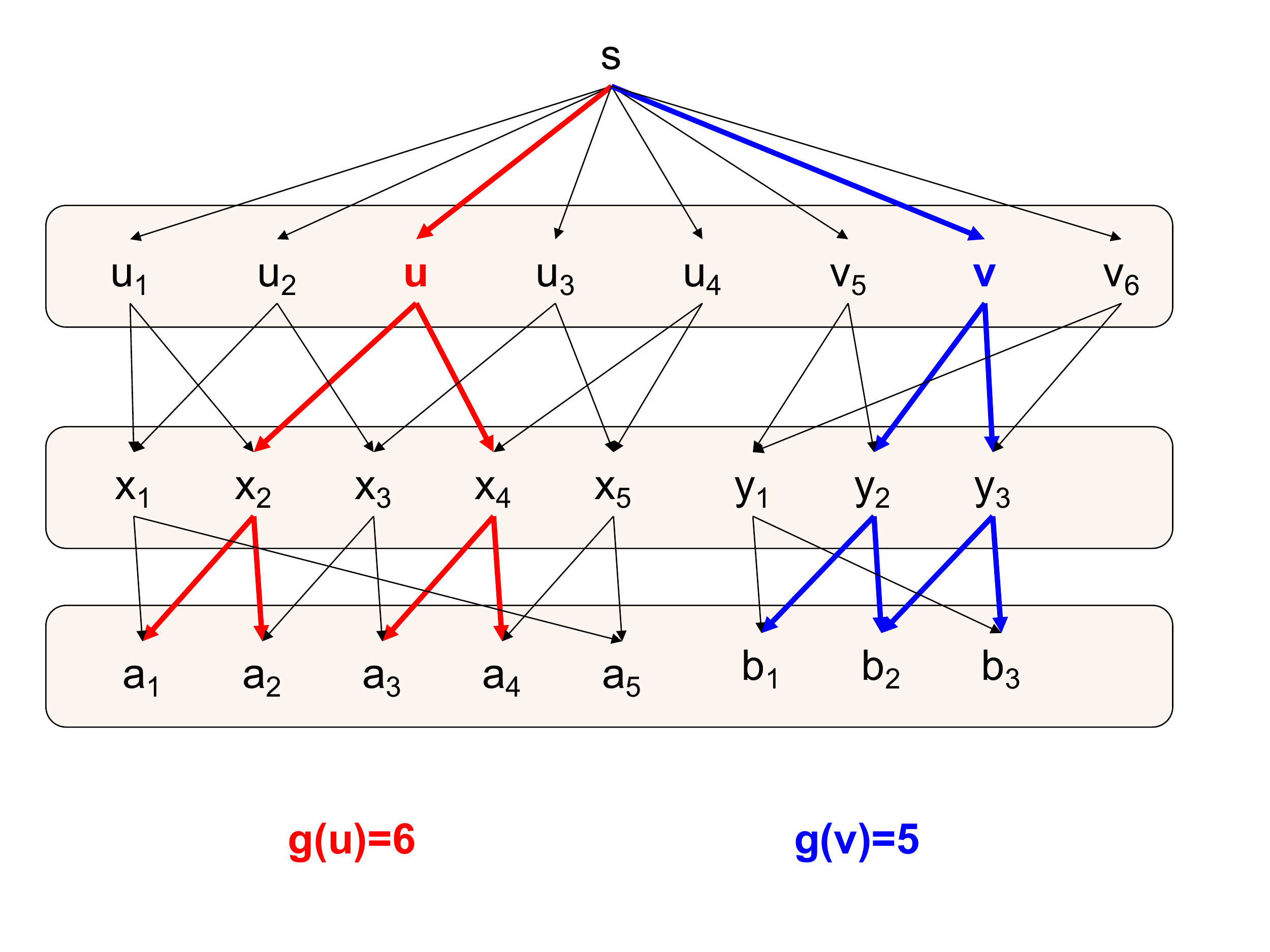}
  \caption{The two nodes $u$ and $v$ have the same color, but
    different betweenness centrality values: $\color{red} g(u)=6 \color{black}, \color{blue}g(v)=5$.}
  \label{fig:no-centrality}
\end{figure}

Finally, we consider the betweenness centrality in a graph and show two
results.  The first is negative, showing that, even if we compute a
stable coloring, nodes with the same color may have different
centrality values.  The second is positive, assuming we compute the
2-WL coloring instead of 1-WL.

The \textit{betweenness centrality} is a measure of influence for
graph vertices~\cite{10.2307/3033543}. The betweenness centrality of a
vertex $v$ is defined as:
\begin{align}
g(v) \defeq & \sum_{s, t: s \neq v \neq t \neq s}\frac{\sigma(s, t \mid v)}{\sigma(s, t)} \label{eq:centrality}
\end{align}
over all vertices $s, t$, where $\sigma(s, t)$ is the number of
shortest paths between $s, t$ and $\sigma(s, t \mid v)$ is the number of
those that pass through $v$.

We usually need to compute the centrality vector, consisting of the
values $g(v)$, for all nodes $v$.  To speed up this computation, we
first compute a quasi-stable coloring, then assume that all nodes of
in the same color have the same centrality value: this reduces the
cost of the computation, since we only need to
compute~\eqref{eq:centrality} once for each color (by randomly
sampling some $v$ in that color).  The question is how reasonable is
the assumption that nodes with the same color have similar centrality
values.

We observe that, even if the coloring is stable, two nodes $u,v$ of
the same color do not necessarily have the same centrality value.
This is shown in Fig.~\ref{fig:no-centrality}, where nodes $u$ and $v$
have the same color, but their centrality values differ.

However, we prove a positive result that still justifies our
heuristics.  Recall that the stable coloring consists of a partition
of the nodes of the graph, also called the 1 Weisfeiler-Lehman method,
or 1-WL.  We prove that, if two nodes are equivalent under the 2-WL
equivalence, then they have the same centrality value.  We refer the
reader to~\cite[pp.9]{DBLP:journals/corr/abs-2112-09992} for the
definition of 2-WL (and, more generally, of k-WL), but instead use the
following beautiful characterization of k-WL proved by Cai, F\"urer,
and Immerman~\cite[Th.5.2]{DBLP:journals/combinatorica/CaiFI92}, which
we review here in a slightly simplified form:

\begin{theorem} \label{th:cai:fuhrer:immerman}
  Let $C^{k+1}$ be the logic obtained by (a) extending First Order
  Logic with counting quantifiers of the form
  $\exists^{\geq m}x \varphi$, which means ``there are at last $m$
  distinct values $x$ that satisfy $\varphi$'', and (b) restricted to
  use only $k+1$ variables.  Then two nodes $a, b$ in a graph have the
  same $k$-WL color iff they satisfy the same $C^{k+1}$ formulas.
\end{theorem}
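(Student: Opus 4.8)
The plan is to route both directions of the equivalence through a single combinatorial game, the \emph{bijective $(k+1)$-pebble game}, following the game-theoretic approach of Hella and of Immerman--Lander (this is, of course, a known theorem; I only sketch the standard argument). Since we compare two vertices $a,b$ of a fixed graph $G$, I would set up the game on two copies of $G$ with $a$, respectively $b$, marked as a distinguished constant, played with $k+1$ pebble pairs on $V(G)$. In each round Spoiler lifts one pebble pair, Duplicator must supply a bijection $\pi : V(G) \to V(G)$, and Spoiler then re-places that pebble pair on some $(u,\pi(u))$; Duplicator survives the round iff the pebbled tuples still induce a partial isomorphism with the same atomic type, including the marking. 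Write $\mathrm{Dup}_r(a,b)$ for the statement that Duplicator wins the $r$-round game from the start position.

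First I would prove the logical characterization, $\mathrm{Dup}_r(a,b)$ iff $a$ and $b$ satisfy the same $C^{k+1}$ formulas of quantifier rank at most $r$, by induction on $r$. The interesting direction is that a distinguishing formula gives Spoiler a winning move: a distinguishing formula $\exists^{\geq m} x\, \varphi$ forces the witness sets at $a$ and at $b$ to have different cardinalities, so \emph{every} bijection $\pi$ must map some witness on one side to a non-witness on the other; Spoiler pebbles that element and recurses on $\varphi$, which has lower quantifier rank and reuses a variable, hence stays within $k+1$ variables. The converse builds Duplicator's bijection from the observation that the rank-$(r-1)$ $C^{k+1}$-types, relative to the already-pebbled tuple, partition $V(G)$ into blocks whose sizes agree on the two sides; a type-preserving bijection then exists by a matching/pigeonhole argument.

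Next I would prove the automaton characterization, $\mathrm{Dup}_r(a,b)$ iff the $r$-th round of $k$-WL assigns $a$ and $b$ the same color, again by induction on $r$. The content here is that one refinement step of $k$-WL --- which splits two tuples exactly when the \emph{multisets} of colors obtained by substituting each vertex into a fixed coordinate disagree --- corresponds precisely to one round of the bijective game: a color-preserving bijection between the round-$(r-1)$ color classes exists iff those two multisets coincide, which is the same pigeonhole step as above. Taking $r$ large enough that both the coloring and the game value stabilize (which happens after finitely many rounds on a finite graph) shows that $k$-WL-equivalence and $C^{k+1}$-equivalence are the same relation, giving the theorem.

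The main obstacle is the dimension bookkeeping: $k$-WL operates on $k$-tuples, yet the matching logic is $C^{k+1}$ and the game has $k+1$ pebbles. The extra ``$+1$'' is exactly the fresh variable (equivalently, the extra pebble) needed to witness a single refinement step, namely the vertex substituted into one coordinate. Making the correspondence between ``refine one coordinate of a $k$-tuple'' and ``play one of the $k+1$ pebbles'' line up, while verifying that a color-preserving bijection exists precisely when the neighbor-color multisets agree, is where the real care is required. I would obtain the vertex statement as the special case in which the initial configuration fixes one pebble on the marked constant, so that only $k$ pebbles move freely during refinement; for $k=1$ this specializes to ordinary color refinement and $C^2$, recovering the classical fact quoted earlier in the paper.
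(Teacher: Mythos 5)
The paper itself offers no proof of this statement: it is quoted, in simplified form, from Cai, F\"urer, and Immerman and used as a black box in proving Theorem~\ref{th:centrality}. So your sketch can only be measured against the standard literature argument, and your overall route --- Hella's bijective pebble game as a bridge, one induction tying the game to $C^{k+1}$-equivalence, a second tying game rounds to WL-refinement rounds --- is exactly that argument in outline.

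The difficulty is that the ``dimension bookkeeping'' you yourself flag as the crux is resolved incorrectly, in two distinct places. First, marking $a$ and $b$ as distinguished \emph{constants} changes the logic. A persistent constant $c$ together with $k+1$ reusable pebbles captures $C^{k+1}(c)$-equivalence, which is strictly finer than $k$-WL vertex equivalence. For $k=1$, let $G$ be the disjoint union of a triangle and a $4$-cycle, with $a$ in the triangle and $b$ in the $4$-cycle: every vertex has degree $2$, so $1$-WL gives all vertices the same stable color, yet the $C^2(c)$ sentence $\exists x\exists y\,(E(c,x)\wedge E(x,y)\wedge E(y,c)\wedge x\neq y)$ is true at $a$ and false at $b$; correspondingly, Spoiler wins your game by pebbling the two other triangle vertices, since Duplicator's second answer would be forced onto the constant $b$ itself, breaking the atomic type. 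Your fallback reading --- permanently freezing one pebble so that ``only $k$ pebbles move freely'' --- errs in the opposite direction: that game captures $C^k(c)$, and for $k=1$ it cannot even express ``$c$ has a neighbor of degree $5$,'' a property $1$-WL does distinguish. The correct setup uses no constant at all: $k+1$ pebble pairs, one pair placed initially on $(a,b)$, with \emph{every} pebble, including that one, liftable and re-playable; equivalently, apply the tuple version of the theorem to the diagonal tuples $(a,\dots,a)$ and $(b,\dots,b)$. The free variable of $\varphi(x)$ must count among the $k+1$ variables and must remain re-bindable.

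Second, your description of one $k$-WL refinement step --- splitting two tuples on the multisets of colors obtained by ``substituting each vertex into a fixed coordinate'' --- is the \emph{oblivious} variant of WL, which corresponds to $C^k$, not $C^{k+1}$. The $k$-WL of Cai--F\"urer--Immerman, the one in the statement, refines the color of $\bar u$ by the multiset, over vertices $w$, of the \emph{joint} vector $\bigl(c(\bar u[1\mapsto w]),\dots,c(\bar u[k\mapsto w])\bigr)$. It is precisely the extra $(k{+}1)$-st pebble, held simultaneously with the other $k$, that implements this joint substitution in the game; with your coordinate-wise rule the claimed round-for-round correspondence with the $(k+1)$-pebble game breaks, and the equivalence you would actually establish is off by one dimension. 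Neither slip is cosmetic: each one, left in place, makes the asserted ``iff'' false.
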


We prove in Appendix~\ref{app:proof:th:centrality}:

\begin{theorem} \label{th:centrality}
  Let $u,v$ be two nodes in a graph that have the same 2-WL color.
  Then they have the same centrality.
\end{theorem}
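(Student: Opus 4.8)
The plan is to express every ingredient of the betweenness score as a $2$-WL invariant and then aggregate. First I would eliminate the ``through $v$'' counter by the standard identity $\sigma(s,t\mid v)=\sigma(s,v)\cdot\sigma(v,t)$ whenever $v$ lies on a shortest $s$--$t$ path, i.e. whenever $\mathrm{dist}(s,v)+\mathrm{dist}(v,t)=\mathrm{dist}(s,t)$, and $\sigma(s,t\mid v)=0$ otherwise. Substituting into \eqref{eq:centrality} rewrites $g(v)$ as a sum over ordered pairs $(s,t)$ (with $s\neq v\neq t$, $s\neq t$) whose summand is a fixed function of the six numbers $\mathrm{dist}(s,v),\sigma(s,v),\mathrm{dist}(v,t),\sigma(v,t),\mathrm{dist}(s,t),\sigma(s,t)$ --- three pieces of ``pairwise shortest-path data'', one for each of the pairs $(s,v)$, $(v,t)$, and $(s,t)$. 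It therefore suffices to show (i) that this pairwise data is an invariant of the $2$-WL color of the pair, and (ii) that, for each combination of pair-colors, the number of pairs $(s,t)$ realizing it around $v$ depends only on the $2$-WL color of $v$.

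For (i), I would prove a lemma that, for each fixed $d$ and $n$, the binary relations ``$\mathrm{dist}(x,y)=d$'' and ``$\sigma(x,y)=n$'' are definable in $C^3$ (counting logic with three variables). Bounded-distance reachability is the classical three-variable predicate: with $x,y$ free one reuses a single auxiliary variable to walk out the $d$ steps. The count $\sigma$ I would obtain by induction on $d$ from the recurrence $\sigma(x,y)=\sum_{z}\sigma(x,z)$ taken over in-neighbors $z$ of $y$ with $\mathrm{dist}(x,z)=d-1$: grouping these predecessors by the (already $C^3$-definable, by induction) value of $\sigma(x,z)$ turns the weighted sum into a finite disjunction of counting quantifiers $\exists^{=k}z(\cdots)$, each using only the two free endpoints and the one recycled variable $z$. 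By the pair version of Cai--F\"urer--Immerman (Theorem~\ref{th:cai:fuhrer:immerman}), any $C^3$-definable property of a pair is constant on $2$-WL pair-color classes; hence $\chi(s,t)$ determines both $\mathrm{dist}(s,t)$ and $\sigma(s,t)$, and the summand above is a well-defined function $H$ of the color triple $(\chi(s,v),\chi(v,t),\chi(s,t))$. Proving that the \emph{number of shortest paths} is a $2$-WL invariant is the crux of the argument: the distance part is routine, but the path-count recurrence is where the three-variable bookkeeping must be done carefully, and I expect it to be the main obstacle.

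For (ii) I would group the double sum by color triples, so that $g(v)=\sum_{c_1,c_2,c_3}H(c_1,c_2,c_3)\,N_v(c_1,c_2,c_3)$ with $N_v(c_1,c_2,c_3)=\#\{(s,t):\chi(s,v)=c_1,\ \chi(v,t)=c_2,\ \chi(s,t)=c_3\}$ (the constraints $s\neq v\neq t$, $s\neq t$ are absorbed by restricting to off-diagonal colors, which $2$-WL keeps separate). Here I would invoke that the stable $2$-WL coloring is a coherent configuration: for a pair $(s,v)$ of color $c_1$, the number of $t$ closing a triangle of prescribed colors is a global intersection number $p^{c_1}_{c_3\,c_2^{*}}$, independent of the particular pair. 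This lets me factor $N_v(c_1,c_2,c_3)=\big(\#\{s:\chi(s,v)=c_1\}\big)\cdot p^{c_1}_{c_3\,c_2^{*}}$. The first factor --- the number of nodes joined to $v$ by a pair of color $c_1$ --- is a $C^3$ property of the \emph{single} node $v$, so by the node version of Theorem~\ref{th:cai:fuhrer:immerman} it agrees for $u$ and $v$; the second factor is a constant of the graph. Hence $N_u=N_v$ for every color triple, and $g(u)=g(v)$. I would remark that coherence is genuinely needed here rather than pure definability: with three named endpoints $s,v,t$ the three-variable budget is exhausted, so the triangle counts cannot be read off as a $C^3$ property directly and must instead come from the structure constants of the stable coloring.
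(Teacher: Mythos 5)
Your proposal is correct, and its technical core coincides with the paper's: both proofs hinge on the same key lemma --- that for fixed $d$ and $n$ the predicates $d(x,y)=d$ and $\sigma(x,y)=n$ are expressible in $C^3$ --- proved by the same induction (walk out the distance with one recycled variable; turn the predecessor recurrence $\sigma(x,y)=\sum_z \sigma(x,z)$ into a finite disjunction of counting quantifiers by grouping predecessors according to their already-definable $\sigma$-values), and both eliminate $\sigma(s,t \mid v)$ via the product identity under the additivity-of-distances condition. You correctly identify this definability of path counts as the crux. Where you diverge is the aggregation step. The paper stays at the level of single nodes: it writes $g(v)$ as an explicit sum of contributions $M_1 M_2/M$ over configurations certified by the formulas $\Psi_{M,d}$, and concludes from the node version of Theorem~\ref{th:cai:fuhrer:immerman}; the counting over pairs $(s,t)$ is implicitly handled by nested counting quantifiers, reusing (shadowing) the third variable inside each $\Psi$. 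You instead pass to the pair level: pair-CFI makes $d$ and $\sigma$ functions of the 2-WL pair color, and the coherent-configuration structure of the stable 2-WL partition (intersection numbers $p^{c_1}_{c_3 c_2^*}$, valencies constant on fibers) factors the count $N_v(c_1,c_2,c_3)$ and yields $g(u)=g(v)$. This is a valid and arguably cleaner bookkeeping, but note two things. First, it relies on more than the paper's simplified statement of Theorem~\ref{th:cai:fuhrer:immerman}, which is phrased for nodes only: you need the standard tuple version of Cai--F\"urer--Immerman together with the (also standard) fact that stable 2-WL colorings form coherent configurations; both are in the literature, so this is a presentational dependency rather than a gap. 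Second, your closing remark that coherence is \emph{genuinely needed} because ``the three-variable budget is exhausted'' is mistaken: each of the subformulas $\Psi_{M_1,d_1}(s,v)$, $\Psi_{M_2,d_2}(v,t)$, $\Psi_{M,d_1+d_2}(s,t)$ mentions only two of the three names, so each can re-quantify (shadow) the remaining variable internally, and the triangle counts around $v$ are then recovered by nested counting quantifiers of the form $\exists^{= m}s\,\exists^{= k}t\,(\cdots)$ with $v$ free --- which is exactly how the paper's finish works without ever invoking coherent configurations.
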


We anticipate further applications for our compression. Promising
problems to approximate are those whose solutions are robust to edge perturbations, capturing graph-wide
properties, including: clustering, node embedding, and computing graph layouts.

\section{Algorithm}

\label{sec:algo}

We have defined two variants of quasi-stable colorings, which allow us
to trade off the degree of stability (e.g. by varying $q$ or
$\varepsilon$) for the compression ratio (number of colors).  It turns
out that computing a quasi-stable coloring is more difficult than
computing the traditional stable coloring, for both variants.  We
will describe the challenge first, then introduce our proposed
algorithm.

\begin{figure}
  \centering
  \includegraphics[clip, trim=0cm 5.5cm 0cm 0cm, width=0.6\linewidth]{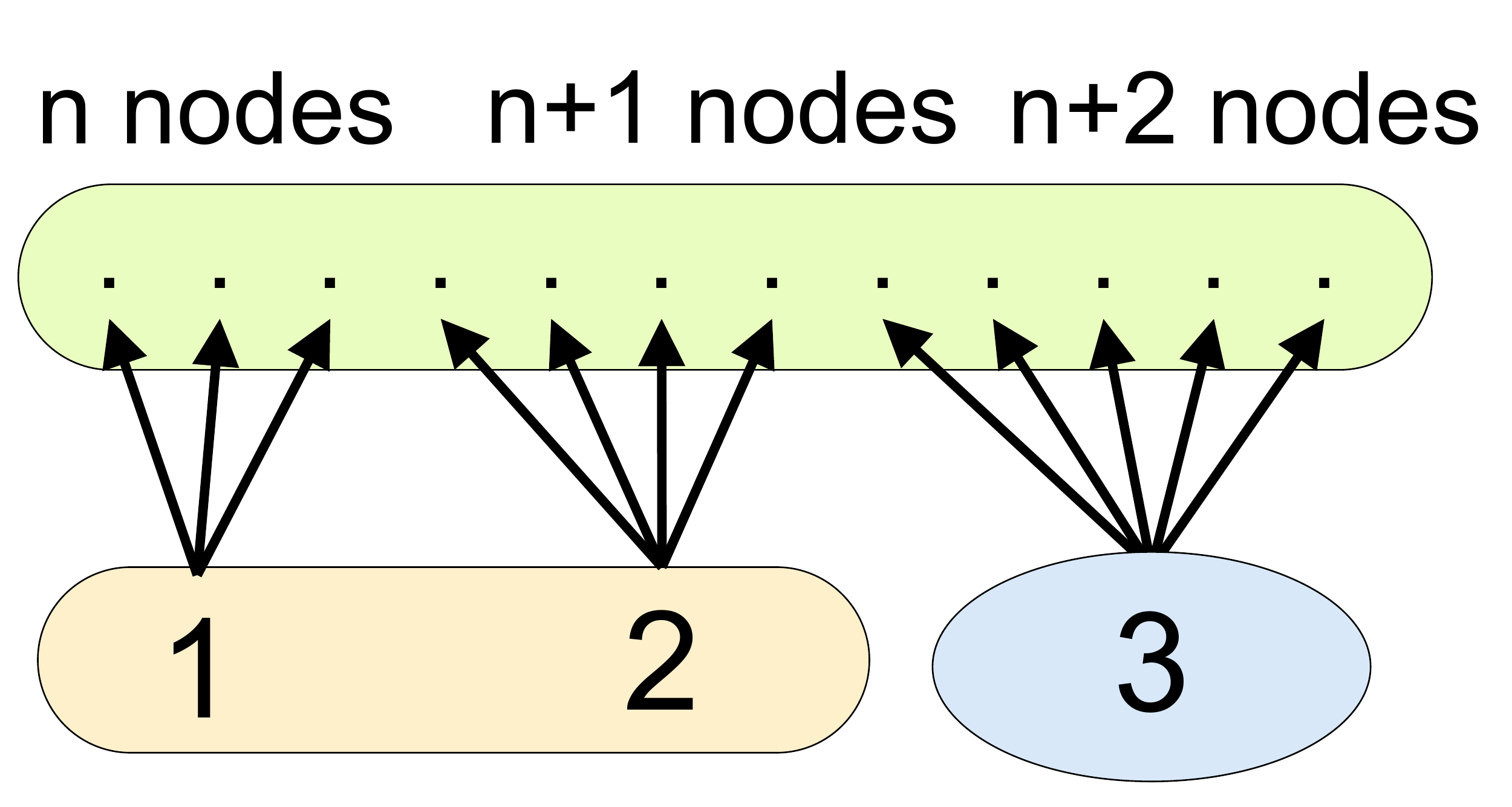}
  \caption{A graph with two maximal q-colorings, for $q=1$, or two
    maximal $\varepsilon$-relative coloring, for $\varepsilon=1/n$.
    All nodes at the top are in the same color (green) since all have
    exactly one incoming edge.  The nodes at the bottom can be
    partitioned into either $\set{1,2}$ and $\set{3}$ (as shown) or
    $\set{1}$ and $\set{2,3}$.  Both are maximal 1-stable colorings
    (because the degrees of two nodes differ by at most 1), and also
    $1/n$-relative colorings (because the relative error is at most
    $(n+1)/n \leq e^{1/n}$).}
  \label{fig:no-lub}
\end{figure}

\subsection{Complexity}
\label{sec:algo:cmplx}

The notion of {\em stable} coloring has the elegant property that
every graph has a unique, maximal stable coloring.  We show here that
this property fails for quasi-stable. We use standard
  terminology from partially ordered sets and call a valid coloring
  \textit{maximal} when no valid coarsening exists, i.e. it cannot be
  greedily improved, and call it \textit{maximum} or \textit{greatest
    element} when it is a coarsening of all valid colorings.
  Equivalently, a valid coloring is maximum if and only if it is the
  unique maximal valid coloring. Consider the graph in
Fig.~\ref{fig:no-lub}: there are two distinct maximal 1-stable
colorings, and also two distinct $1/n$-relative colorings, because we
can partition the nodes $1,2,3$ either as $\set{1,2}, \set{3}$ or as
$\set{1}, \set{2,3}$, but cannot leave them in the same color.  In
fact, we prove:

\begin{theorem} \label{th:lub} (1) If $\sim$ is a congruence
  w.r.t. addition (i.e. an equivalence relation satisfying
  $x \sim y \Rightarrow (x+z)\sim (y+z)$) then any graph admits a
  unique maximum $\sim$quasi stable coloring, which can be computed in
  PTIME.  (2) Computing a maximal q-stable coloring is NP-complete,
  and similarly for an $\varepsilon$-stable coloring.
\end{theorem}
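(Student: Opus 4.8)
The theorem has two parts, and I would treat them separately.

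For part~(1), the plan is to mimic the classical argument that the least upper bound of two stable colorings is stable, generalizing it from equality to an arbitrary congruence $\sim$. The key structural fact I would first establish is a \emph{least-upper-bound property}: if $\bm P$ and $\bm P'$ are both $\sim$quasi-stable colorings, then so is their least upper bound $\bm P \vee \bm P'$. Granting this, the unique maximum coloring is obtained as the join of \emph{all} $\sim$quasi-stable colorings (the set is nonempty since $\bm P_\bot$ is always quasi-stable), so uniqueness of the maximal element follows immediately from the lattice structure reviewed in Sec.~\ref{sec:background}. The crucial point where the congruence hypothesis is used is in proving the lub property: when $\sim$ is a congruence w.r.t.\ addition, the relation ``$w(x,P_j) \sim w(y,P_j)$'' is itself an equivalence that is preserved when we merge colors, because summing weights over a coarser color class respects $\sim$ (this is exactly where $x\sim y \Rightarrow (x+z)\sim(y+z)$ is needed, and where $\sim_q$ fails since it is not transitive). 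For the \textsc{ptime} claim, I would give the naive refinement procedure: start from $\bm P_\bot$ and iteratively merge color classes that are $\sim$-indistinguishable, or dually run the standard color-refinement loop using $\sim$-equivalence classes of neighbor-weight vectors in place of exact equality; since $\sim$ is a genuine equivalence relation, each node maps to a well-defined $\sim$-class, the partition refines monotonically, and the process terminates in at most $|X|$ rounds, each computable in polynomial time.

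For part~(2), the plan is a reduction establishing \textsc{NP}-completeness of deciding whether a $q$-stable coloring with at most $k$ colors exists. Membership in \textsc{NP} is routine: a candidate coloring can be verified in polynomial time by checking, for every pair of colors $P_i,P_j$, that the row-weight sums into $P_j$ and column-weight sums from $P_i$ each span a range of at most $q$, which is exactly Definition~\ref{def:quasi:stable} instantiated with $\sim_q$. For hardness, I would reduce from a graph-partitioning or clustering problem whose difficulty stems from the non-transitivity of the $\sim_q$ relation illustrated in Fig.~\ref{fig:no-lub}: the obstruction there is precisely that $\set{1,2}$ and $\set{2,3}$ are both locally valid merges but $\set{1,2,3}$ is not, so choosing a coloring amounts to selecting a consistent set of pairwise-compatible merges subject to a global size bound. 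This smells like a covering/packing constraint, so I would target a reduction from a known \textsc{NP}-complete problem such as \textsc{Graph $k$-Colorability} or a variant of \textsc{Clique Partition}, encoding ``two nodes may share a color'' as the edges of a compatibility graph built from degree differences, with gadgets that force the $\sim_q$ windows to line up only for the intended partition. The identical construction, with the window $[u e^{-\varepsilon}, u e^{\varepsilon}]$ replacing $[u-q,u+q]$, yields the $\varepsilon$-relative case.

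\textbf{Main obstacle.} The hard part will be designing the gadgets for part~(2): I must ensure that weight sums aggregate \emph{exactly} so that the $\sim_q$ windows encode the intended combinatorial constraint without spurious valid colorings arising from the additive slack $q$, and that the target number of colors $k$ faithfully tracks the optimum of the source instance. Getting the arithmetic of the aggregated weights to enforce the reduction cleanly—rather than the relatively routine lub argument of part~(1)—is where the real work lies.
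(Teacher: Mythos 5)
Your plan for part (1) is essentially the paper's own proof: establish that the join $\bm P \vee \bm Q$ of two $\sim$quasi-stable colorings is again $\sim$quasi-stable (the congruence hypothesis lets you sum weights over the $\bm P$-colors composing a join-color, and transitivity lets you chain along the sequence of merges), take the join of all $\sim$quasi-stable colorings to get the unique maximum, and compute it in PTIME by color refinement with $\sim$-classes in place of exact equality. One caution: of your two suggested procedures, only the top-down refinement is justified; greedy bottom-up merging from $\bm P_\bot$ has no a priori guarantee of reaching the maximum, though this does not affect correctness since you offer refinement as the alternative.

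The genuine gap is part (2), where your proposal contains no reduction: you explicitly defer the gadget design, and the direction you point in would fail. Your plan is to encode ``two nodes may share a color'' as an arbitrary compatibility graph and reduce from \textsc{Graph $k$-Colorability} or \textsc{Clique Partition}. There are two obstructions. First, in a general graph the relation ``$u$ and $v$ may share a color'' is not a well-defined binary relation at all: whether two nodes can be merged depends on how every other node is colored, since $q$-stability constrains degrees into each color class of the final partition. Second, even in constructions where compatibility does collapse to a binary relation on degree vectors, the realizable compatibility graphs are intersection graphs of axis-parallel $q \times q$ squares (each node becomes a point whose coordinates are its in- and out-degrees, and a color class is valid iff all its points fit in one such square, since a window condition $\max - \min \leq q$ is equivalent to all pairwise differences being $\leq q$). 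This is a severely restricted geometric class --- for instance, a pigeonhole argument shows it contains no induced $K_{1,5}$ --- so arbitrary instances of graph coloring or clique partition cannot be encoded this way. The missing idea, which is the paper's proof, is to reduce from a problem that already has exactly this geometric structure: \textsc{Box-Cover}, the NP-complete problem of covering integer points $(a_1,b_1),\ldots,(a_n,b_n)$ in the plane with a minimum number of squares of fixed width $q$. The paper builds a $3$-partite graph $(X,Y,Z,E)$ with all edges from $X$ to $Y$ or from $Y$ to $Z$, where $y_i$ has exactly $a_i$ incoming edges from $X$ and $b_i$ outgoing edges to $Z$; because $X$ and $Z$ can be kept monochromatic, the color classes of $Y$ in a $q$-stable coloring correspond exactly to sets of points covered by a common $q \times q$ square, so minimizing the number of colors is minimizing the number of squares. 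Your instinct that the obstruction is a covering constraint is correct, but with a non-geometric source problem the reduction cannot be completed, and completing it is the entire content of part (2).
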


For a simple illustration, fix $c \geq 0$ and define $x \sim y$ if
$\min(x,c) = \min(y,c)$; then $\sim$ is a congruence.  The theorem
implies that there is a unique maximal $\sim$quasi stable coloring.
When $c=1$ then this is the maximal bisimulation, and when
$c = \infty$ then it is the stable coloring.

The theorem implies that while finding a quasi-stable coloring
  is trivial (a unique color per node suffices), finding a
  coloring that cannot be improved is difficult. The question of the practicality of finding a
  ``good enough'' coloring is addressed in
  subsection~\ref{sec:rothko-algo}.

\begin{proof}
  (1) Assume $\sim$ is a congruence.  We first prove that, if
  $\bm P, \bm Q$ are $\sim$-stable colorings of a graph, then so is
  $\bm P \vee \bm Q$.  A color $C$ of $\bm P \vee \bm Q$ can be
  characterized in two ways: (a) for any two nodes $x,x' \in C$, there
  exists a sequence $x_0 := x, x_1, x_2, \ldots, x_n := x'$ such that
  every pair $(x_{i-1},x_i)$ is either in the same color of $\bm P$,
  or the same color of $\bm Q$, and (b) $C$ is both a disjoint union
  of $\bm P$-colors, and a disjoint union of $\bm Q$-colors, and is
  minimal such.  Let $C, D$ be two colors of $\bm P \vee \bm Q$, let
  $x, x' \in C$, let $w=w(x,D), w'=w(x',D)$ be their outgoing weights
  to $D$, and let $x_0 := x, x_1, x_2, \ldots, x_n := x'$ be the
  sequence given by (a).  Fix $i=1,n$, and assume w.l.o.g. that
  $x_{i-1},x_i$ have the same $\bm P$-color.  Then we use the fact
  that $D$ is a union of $\bm P$-colors,
  $D = P_{j_1} \cup \cdots \cup P_{j_k}$, and observe that
  $w(x_{i-1},D)=w(x_{i-1},P_{j_1})+ \cdots + w(x_{i-1},P_{j_k})$ and
  $w(x_i,D)=w(x_i,P_{j_1})+ \cdots + w(x_i,P_{j_k})$.  Since $\bm P$
  is $\sim$-stable, we have
  $w(x_{i-1},P_{j_\ell})\sim w(x_i,P_{j_\ell})$ for all $\ell$, which
  implies $w(x_{i-1},D)\sim w(x_i,D)$ because $\sim$ is a congruence.
  Finally, we derive $w(x,D)=w(x',D)$ because $\sim$ is transitive.
  Thus proves the claim that $\bm P\vee \bm Q$ is $\sim$stable.
  Finally, let $\bm P_1, \bm P_2, \ldots$ be all $\sim$stable
  colorings.  Then $\bm P_1 \vee \bm P_2 \vee \cdots$ is the unique
  maximum $\sim$stable coloring, and can be computed in PTIME using
  color refinement.

  (2) By reduction from the 2-dimensional Geometric Set Cover problem,
  more specifically from BOX-COVER, which is
  NP-complete~\cite{DBLP:journals/ipl/FowlerPT81}: we are given a set
  of points $S = \set{(a_1,b_1), \ldots, (a_n,b_n)} \subseteq \R^2$,
  with integer coordinates, and are asked to cover it with a minimum
  number of squares with a fixed width $q$.  Given this instance of
  BOX-COVER, we construct the following 3-partite graph $(X,Y,Z,E)$.
  All edges go from $X$ to $Y$ or from $Y$ to $Z$.  The set $Y$ has
  $n$ nodes.  Each node $y_i \in y$ has exactly $a_i$ incoming edges
  from $X$, and exactly $b_i$ outgoing edges to $Z$: thus
  $|X|=\sum_i a_i$ and $|Z|=\sum_i b_i$.  Any $q$-stable coloring of
  the graph corresponds to a cover of $S$ with $q \times q$ squares.
\end{proof}

\subsection{\textsc{Rothko} Algorithm}
\label{sec:rothko-algo}

Given the negative results above, we settle for a heuristic-based
algorithm for finding quasi-stable colorings that may not be
  maximal.  The main idea is that, instead of imposing some $q$, the
algorithm repeatedly applies a variant of color refinement, until a
maximum number of colors is reached.  The value of $q$ is the computed
on this coloring. The guarantees given by the theorems in
  Sec.~\ref{sec:approximate} still hold on the resulting colored
  graph, but the quality of the approximation depends on how good the
  value $q$ is when the algorithm terminates.
 We briefly describe the algorithm next.

We call the algorithm \textsc{Rothko} and list it in
Algorithm~\ref{alg:approx}. The name reflects its method of dividing
matrices into a few large, rectangular regions and giving them distinct
colors---resembling Rothko's famous color
field paintings.

This iterative algorithm
operates by refining one color at a time, beginning with the coarsest
(\textit{i.e.}, single color) partition. At every step a \textit{witness} is
identified: that is, the pair of partitions $P_i, P_j$ that maximize error
in the $P_i \rightarrow P_j$ direction. The source color is then split into
$P_i', P_i''$ to reduce the sum of errors $P_i' \rightarrow P_j$ and $P_i''
\rightarrow P_j$. This process is repeated until the desired error bound or number of
colors is reached.

A witness is identified by calculating the maximum, minimum degrees between
all colors ($U, L$ respectively)
and taking the difference of the two matrices. This produces the error
matrix $Err=U-L$, whose $(i, j)$\textsuperscript{th} entry records the q-error of color $P_i$
with respect to $P_j$. Then, a threshold is
set by taking the mean degree into $P_j$ of the nodes in $P_i$. The
elements of $P_i$ are then split depending on whether
their degree exceeds this threshold. We break ties
arbitrarily--notice if we have a tie, often at the next around we will
split the other tied-with color. We find ties to be rare in our experiments.

For some applications, it is desirable to weight the error by the size of
the partition, so that a q-error in a large partition is considered worse
than a q-error in a small partition. As such, the algorithm accepts two 
parameters, $\alpha, \beta$ which allow for building a weight matrix $C$.
These parameters control the weight assigned to the source and target
colors, respectively.
$C$ is multiplied element-wise by $E$ to produce the weighted error matrix $E_{weighted}$.
In practice, we set $\alpha = \beta = 0$ for max-flow problems, as neither the number of source
nor target nodes affects the flow, but only the total edge capacity between the
colors; for linear programs, $\alpha = 1, \beta = 0$ which
prioritizes colors with more rows; for betweenness centrality, $\alpha =
\beta = 1$ as the number of paths depends on both the number of nodes in
source and target color.

Further, for applications where all weights are non-negative, we observe that using the
geometric rather than arithmetic mean results in a more compact coloring.
The intuition can be gleaned from scale-free networks, where the
proportion of nodes with degree $k$ tends is proportional to $k^{-
\gamma}$. Under the most common scale-free model, 
Barabási–Albert~\cite{barabasi1999emergence}, where $\gamma = 3$ and the
average degree 
is $2m$, splitting using arithmetic mean yields unbalanced partitions with
$1 / (8m^3)$ fraction of nodes. Even for modest values of $m$ this
quickly becomes unbalanced (i.e. when $m = 3$ the partition will be split
$1 : 216$). Since the geometric mean is equivalent to the arithmetic
mean in log-space, the split is much less unbalanced (in the previous
example, it would be $1 : 4$). Many natural networks---such as the
internet---are
 thought to be scale-free~\cite{barabasi2013network}.

\begin{algorithm}
  \caption{\textsc{Rothko}\\Computing an approximate partition over a weighted graph $G$, with $n$ colors or
  $\varepsilon$ maximum $q$-error}\label{alg:approx}
  \KwData{$G = (V, E), W : V \times V \rightarrow \mathbb{R}^+$}
  \KwParameter{$n \in \mathbb{Z}^+, \varepsilon \in \mathbb{R}_{\geq 0},
  ~\alpha, \beta \in \mathbb{R}$}
  \KwResult{$P \subset \mathcal{P}(V) $}
  $P\gets \set{ V }$\;
  \While{$|P| < n$}{
    $U_{ij}, L_{ij} \gets \max_{v \in P_i} \text{deg}(v, P_j), \min_{v \in P_i} \text{deg}(v, P_j)$\;
    $Err \gets U - L$\;
    \If{$\max Err \leq \varepsilon$}{break\;}
  $C_{ij} \gets |P_i|^\alpha \times |P_j|^\beta$;  \tcp*[f]{weights}
\\  
    $Err_{\text{weighted}}\gets Err \odot C$; \tcp*[f]{element-wise
    product}\\
    $i, j \gets \argmax_{i, j} Err_{\text{weighted}}$;\tcp*[f]{witness}\\
    $\text{threshold} \gets \text{mean}(\set{x \mid x=\text{deg}(v, P_j), v \in P_i})$ \;
    \tcp{Split $P_i$ at threshold}
  $P_{\text{retain}} \gets \set{ v \in P_i \mid \text{deg}(v, P_j) \leq
  \text{threshold}}$\;
    $P_{\text{eject}} \gets P_i \setminus P_\text{retain} $\;
    $P \gets P \setminus \set{P_i} \cup \set{P_{\text{retain}}, P_{\text{eject}}}$\;
  }
\end{algorithm}

\textsc{Rothko} is an anytime algorithm. It can be interrupted and will still
produce in a valid coloring. The longer it is allowed to run, the better
the resulting coloring is. Further, the stopping condition can be set
depending on a desired number of colors, or target $q$-error, encoded in
Algorithm~\ref{alg:approx} as $n, \varepsilon$ respectively. This is
particularly valuable in interactive applications, where \textsc{Rothko} can be run
as a co-routine, with the application alternating between color refinement
and updating its approximation based on the new colors.

This algorithm is guaranteed to terminate. At each iteration, a color is
chosen to be split. Singleton colors are never selected for splitting, as
their degree-difference into any partition is zero. As such, after enough
iterations either the algorithm will reach its desired error bound, or have
refined into the trivial partitioning---consisting of only singleton
partitions and so a maximum $q$-error of zero.

\section{Evaluation}

\label{sec:evaluation}

We empirically evaluate our notion of quasi-stable coloring,
addressing three questions:

\begin{enumerate}
  \item End-to-end performance: how good is the system downstream? Can it
  effectively trade-off accuracy for speedup?
\item What are the characteristics of the colors? What are the
  properties of the compressed graphs?
\item How efficient and scalable is the \textsc{Rothko} algorithm?
\end{enumerate}

We use 20 datasets for evaluation. Graphs are
outlined in Table~\ref{tab:graphs}, linear programs in
Table~\ref{table:lp}. We list the primary sources in the table, many of these graphs
were found via dataset repositories~\cite{snapnets, waterloo-vision}. Trials are run on a MacOS machine with a 3.2~GHz ARMv8
processor and 16GB of main memory.  A single core is used for all
experiments. All code is run on \textit{Julia} \texttt{v1.7}, with
linear programs being solved with the \textit{Tulip} solver
and max-flow problems with the
\textit{GraphsFlows} library. Tulip is the fastest open-source solver~\cite{DBLP:journals/mpc/TanneauAL21}, while
GraphsFlows uses the state-of-the-art push-relabel algorithm~\cite{DBLP:conf/esa/Goldberg08}. Our coloring implementation, as tested in this paper,
is packaged as \texttt{QuasiStableColors} version \texttt{v0.1.0} and is available
for download.\footnote{\url{https://github.com/mkyl/QuasiStableColors.jl}}

\begin{figure*}
  \centering
  \begin{subfigure}[b]{0.3\textwidth}
  \includegraphics[width=\textwidth]{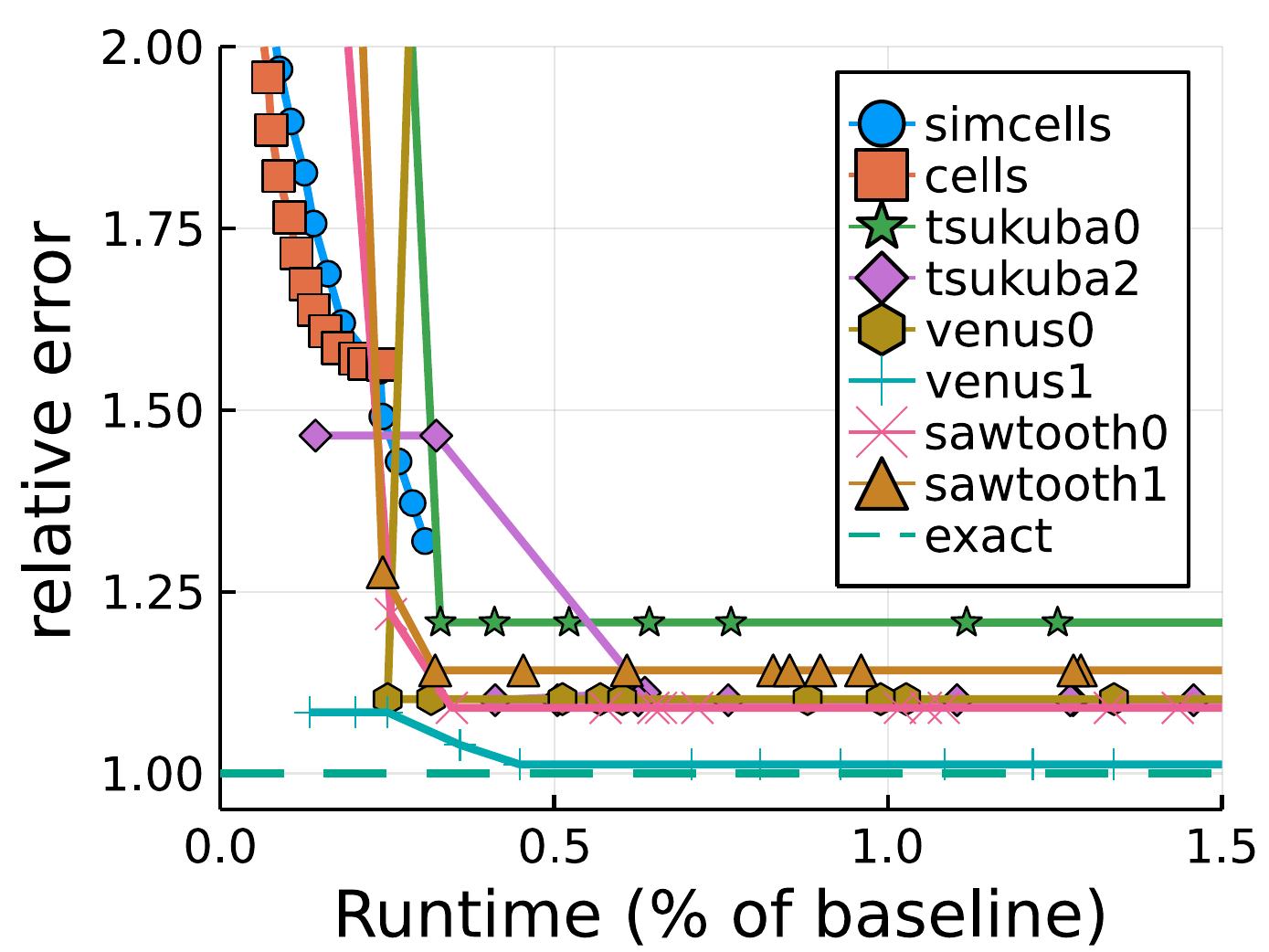}
  \caption{Maximum-flow}
  \end{subfigure}
  \hfill
  \begin{subfigure}[b]{0.3\textwidth}
    \includegraphics[width=\textwidth]{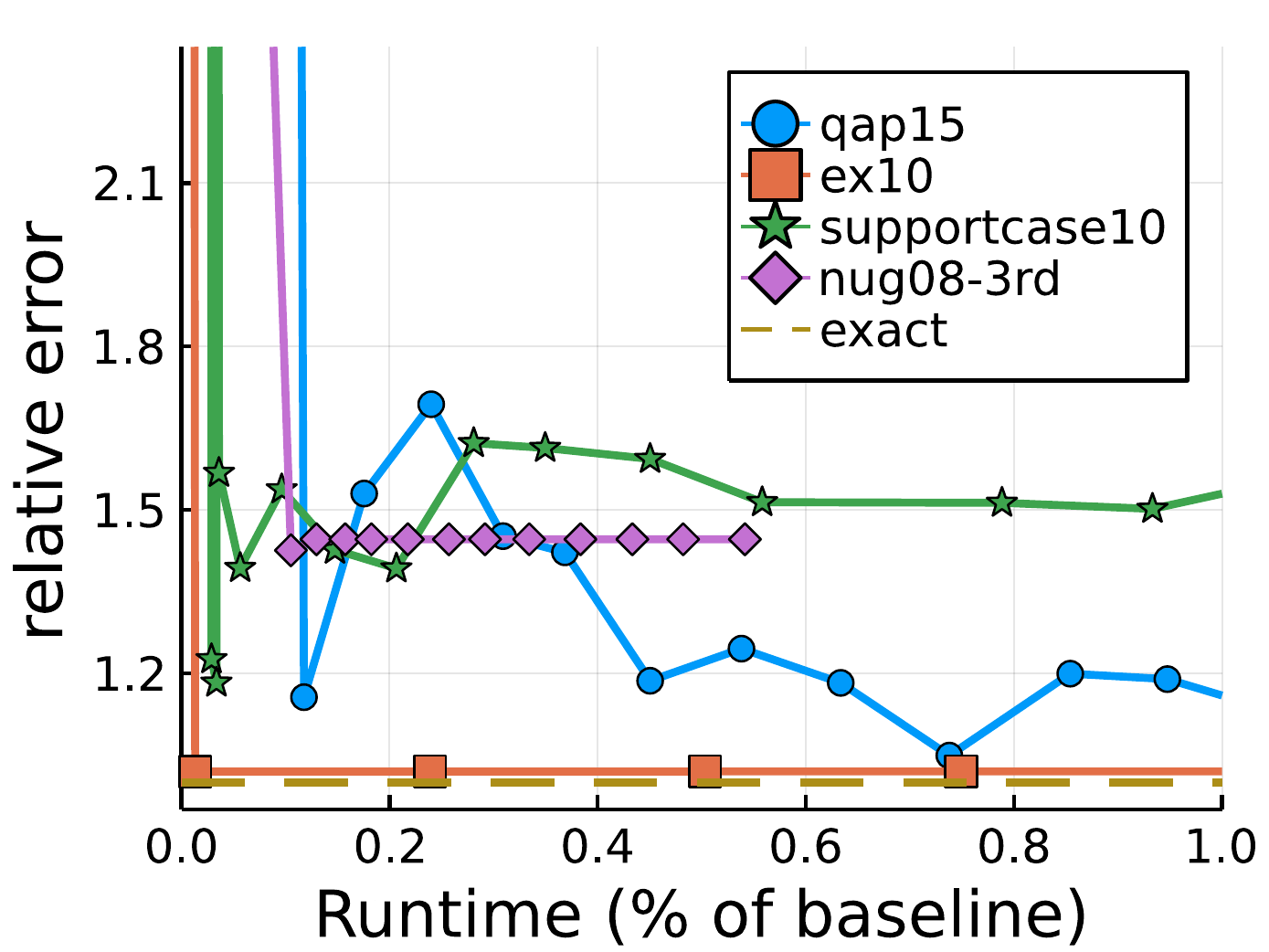}
    \caption{Linear optimization}
    \label{fig:lp}
  \end{subfigure}
  \hfill
  \begin{subfigure}[b]{0.3\textwidth}
    \centering
    \includegraphics[width=\textwidth]{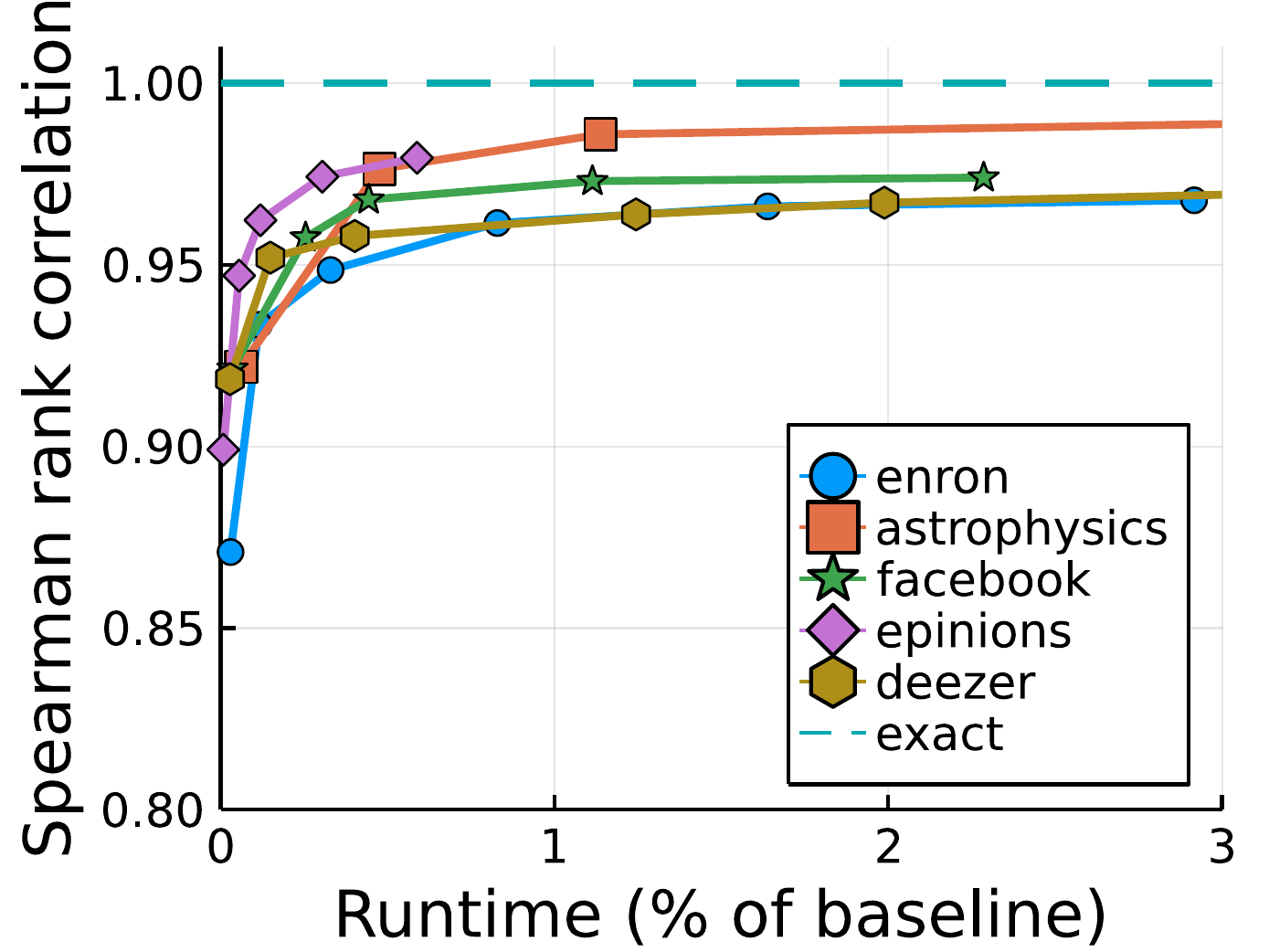}
    \caption{Centrality}
    \label{fig:centrality}
  \end{subfigure}
  \caption{Speed-accuracy trade-offs for three task types and 20 datasets. Runtime
  reported is end-to-end, including the time taken for graph coloring,
  building an approximate instance of the problem and solving it.}
  \label{fig:speed-accuracy}
\end{figure*}

\begin{figure*}
  \centering
  \begin{subfigure}[b]{0.3\textwidth}
  \includegraphics[width=\textwidth]{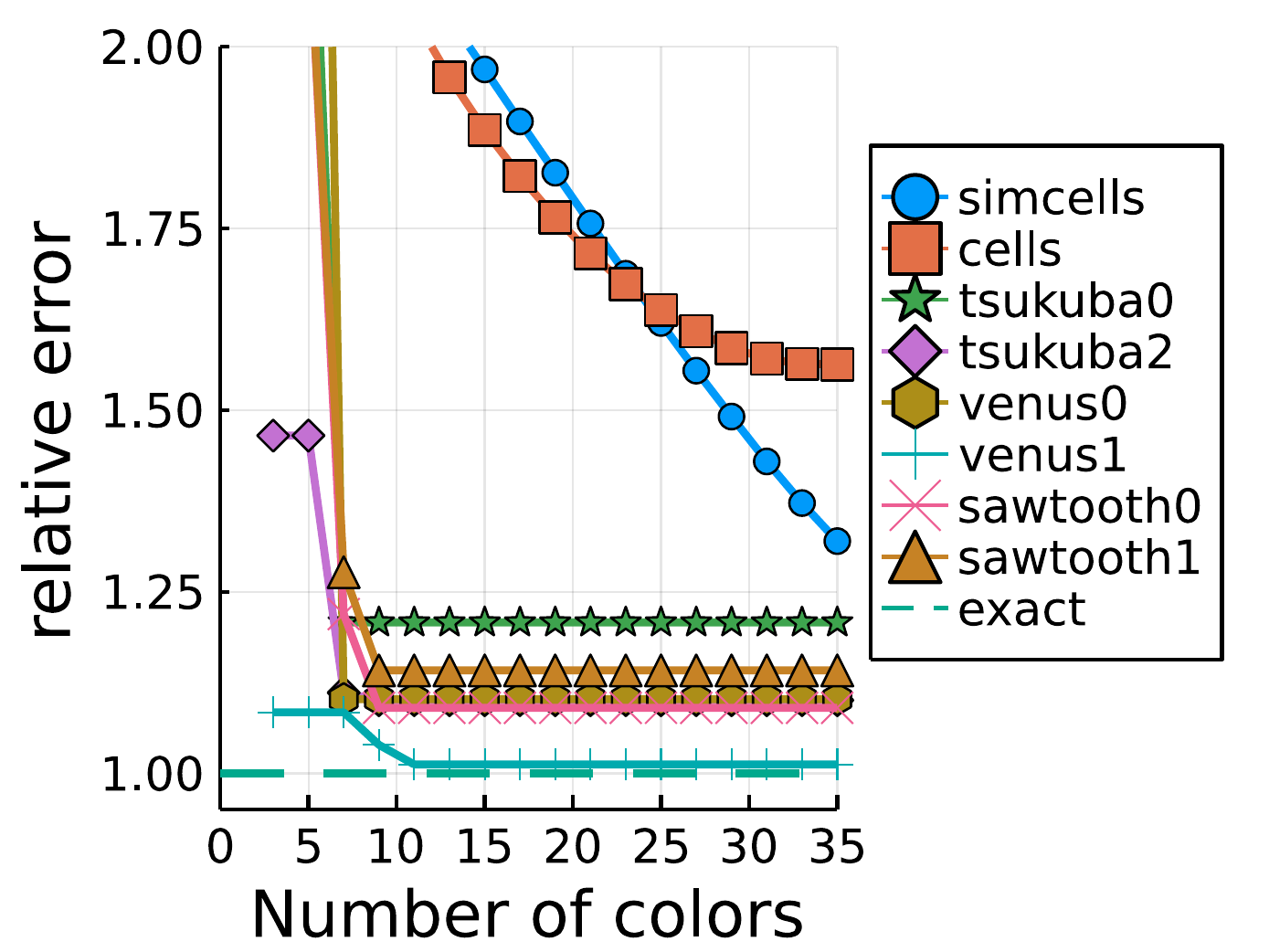}
  \caption{Maximum-flow}
  \end{subfigure}
  \hfill
  \begin{subfigure}[b]{0.3\textwidth}
    \centering
    \includegraphics[width=\textwidth]{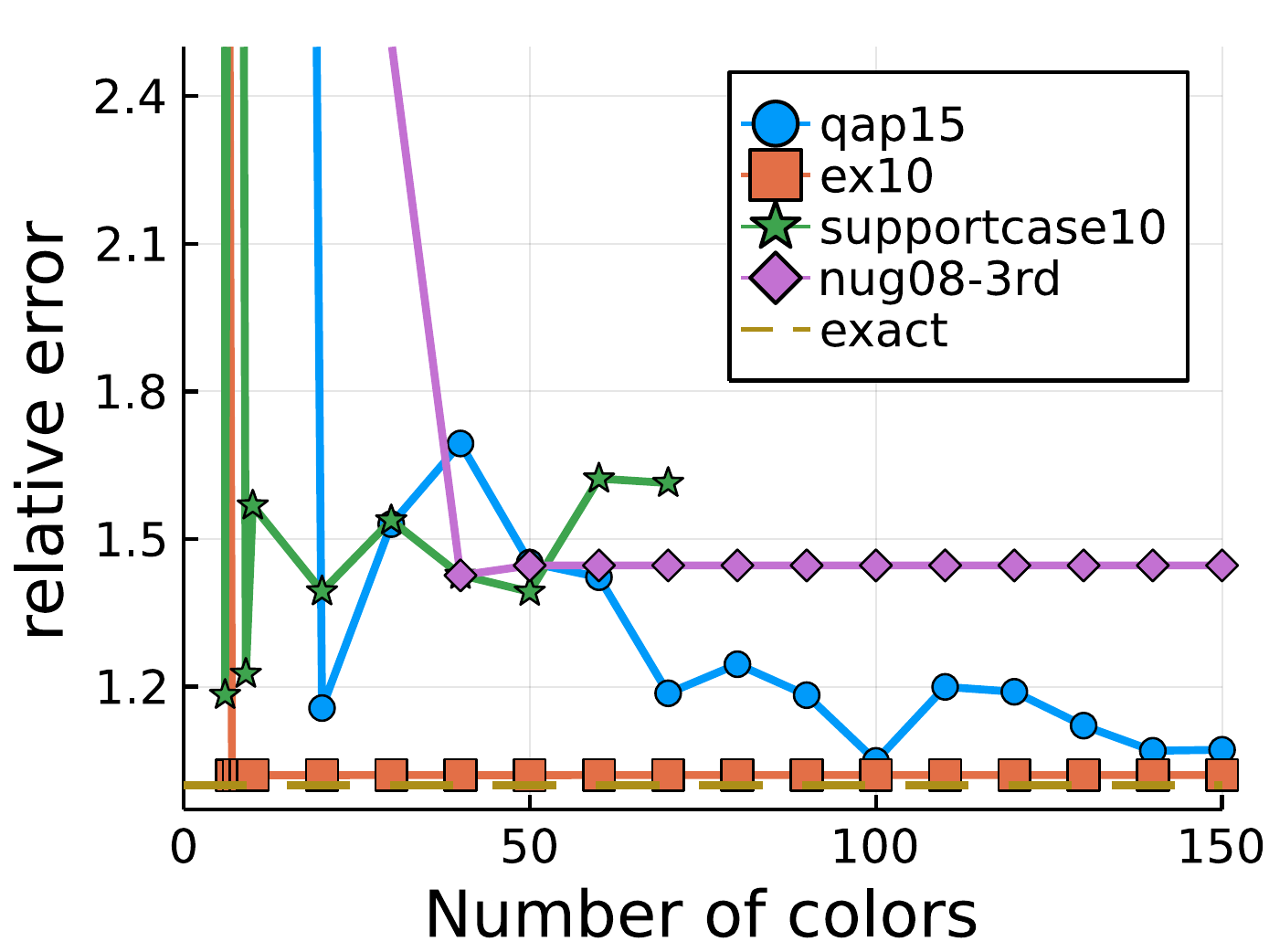}
    \caption{Linear optimization}
  \end{subfigure}
  \hfill
  \begin{subfigure}[b]{0.3\textwidth}
    \centering
    \includegraphics[width=\textwidth]{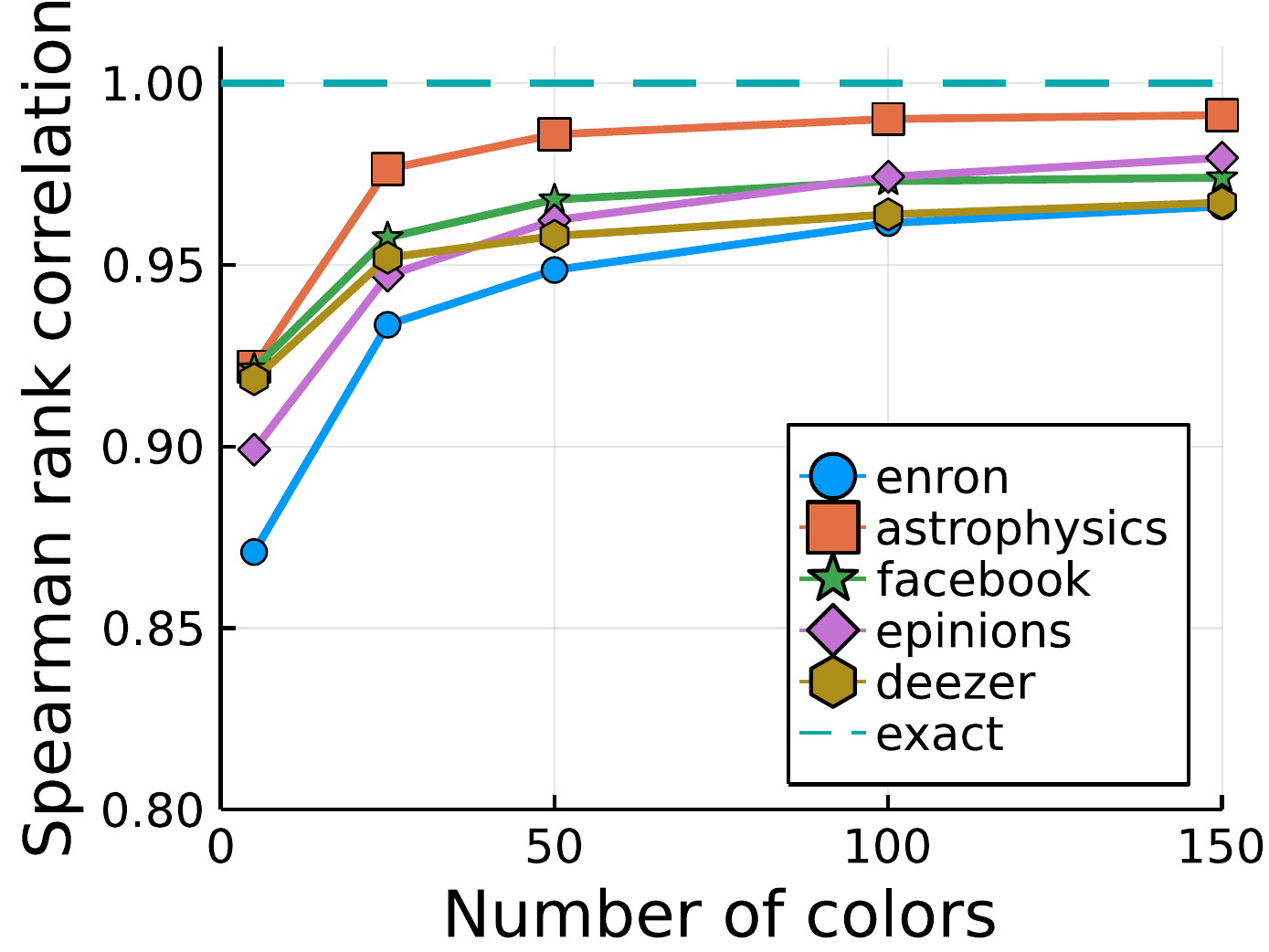}
    \caption{Centrality}
  \end{subfigure}
  \caption{Accuracy as a function of the number of colors, across the same three tasks.}
  \label{fig:color-accuracy}
\end{figure*}

\begin{table}
  \caption{Runtime comparison of q-stable colors
  vs. prior approximations
  (Riondato-Kornaropoulos~\cite{DBLP:conf/wsdm/RiondatoK14}
  and early-stopping~\cite{or-tools-time-limit}) and exact
  algorithms (Brandes~\cite{brandes2001faster} and interior-point solver~\cite{DBLP:journals/mpc/TanneauAL21}). Runtime to
 achieve a target approximation quality is measured; target is correlation ($\rho$) with ground truth
  values for centrality and relative error for linear optimization.
  ``$\times$'' is 20-minute timeout. Units in seconds, lower is better.}
  \label{tab:prior}
  \begin{tabular}{@{}lllllllc@{}}
  \midrule
  & \multicolumn{6}{l}{\textbf{Betweenness centrality}: ours,~\cite{DBLP:conf/wsdm/RiondatoK14}, and~\cite{brandes2001faster}} \\
  & \multicolumn{2}{l}{$\rho=0.90$} & \multicolumn{2}{l}{$\rho = 0.95$} &
   \multicolumn{2}{l}{$\rho=0.97$} & Exact \\
  & Ours& \multicolumn{1}{l|}{Prior}& Ours& \multicolumn{1}{l|}{Prior} & Ours& \multicolumn{1}{l|}{Prior} & \\ \midrule
  Astroph. &   \textbf{0.13}  &  \multicolumn{1}{l|}{15.2} &
  \textbf{1.03}          &  \multicolumn{1}{l|}{41.4}&\textbf{2.49} &
  \multicolumn{1}{l|}{61.1} & 223 \\
  Facebook     &  \textbf{0.07} & \multicolumn{1}{l|}{3.2}&\textbf{0.53}&
  \multicolumn{1}{l|}{7.1}& \textbf{2.23} & \multicolumn{1}{l|}{12.6} & 221\\
  Deezer       &   \textbf{0.05} & \multicolumn{1}{l|}{3.6}&\textbf{1.11}&
  \multicolumn{1}{l|}{7.2}&\textbf{8.56} & \multicolumn{1}{l|}{14.8} & 295 \\
  Enron        &    \textbf{0.41} & \multicolumn{1}{l|}{2.6} & \textbf{3.06} &
  \multicolumn{1}{l|}{5.6}& 10.8 & \multicolumn{1}{l|}{\textbf{8.7}} & 380 \\
  Epinions     &  \textbf{0.18} & 17.1 & \textbf{3.15}
  & 36.5 & \textbf{7.95} & 58.2 &  2552 \\
  \midrule
  & \multicolumn{6}{l}{\textbf{Linear optimization}: ours,~\cite{or-tools-time-limit}, and~\cite{DBLP:journals/mpc/TanneauAL21}} \\
  & \multicolumn{2}{l}{$\text{rel. err.}=3.0$} &
  \multicolumn{2}{l}{$\text{rel. err.} = 2.0$} &
   \multicolumn{2}{l}{$\text{rel err.} = 1.5$} & Exact\\
   & Ours& Prior& Ours& Prior& Ours& Prior\\ \midrule
  qap15 &   \textbf{3.20}  &  \multicolumn{1}{l|}{112.}&
  \textbf{4.91}          &  \multicolumn{1}{l|}{524.}&\textbf{11.4} &
  \multicolumn{1}{l|}{$\times$} & 1~320 \\
  nug08. &   \textbf{5.40}  &  \multicolumn{1}{l|}{1~027.}&
  \textbf{6.65}          &  \multicolumn{1}{l|}{$\times$} & \textbf{6.65} &
  \multicolumn{1}{l|}{$\times$} & 6~000\\
  support. &   \textbf{0.51}  &  \multicolumn{1}{l|}{143.}&
  \textbf{4.98}          &  \multicolumn{1}{l|}{143.}&$\times$ &
  \multicolumn{1}{l|}{$\times$} & 1~860  \\
  ex10 &   \textbf{0.247}  &  \multicolumn{1}{l|}{795.}&
  \textbf{14.0}          &  \multicolumn{1}{l|}{795.}&\textbf{14.0} &
  \multicolumn{1}{l|}{795.} & 1~440 \\ \bottomrule
    \end{tabular}
    \end{table}

\begin{table}
    \caption{Summary of graphs used for evaluation}
    \label{tab:graphs}
    \begin{tabular}{@{}lrrp{0.65cm}c@{}}
    \toprule
    \textbf{Name}        & Vertices & Edges & Real/ Sim. & Source \\ 
    {\em General evaluation} \\  \midrule
    \quad Karate & 34 & 75 & R & \cite{girvan2002community} \\
    \quad OpenFlights &  3~425   &  38~513  &  R & \cite{openflights} \\
    \quad \textsc{Dblp}        &  317~080 &  1~049~866  & R  & \cite{dblp}
    \vspace{0.1cm} \\
    {\em Centrality} \\  \midrule
    \quad Astrophysics & 18~772 & 198~110 & R & \cite{DBLP:journals/tkdd/LeskovecKF07}\\
    \quad Facebook & 22~470 & 171~002 & R & \cite{DBLP:conf/nips/McAuleyL12} \\
    \quad Deezer & 28~281 & 92~752 & R & \cite{feather} \\
    \quad Enron & 36~692 & 183~831 & R & \cite{DBLP:conf/ceas/KlimtY04} \\
    \quad Epinions & 75~879 & 508~837 & R &
    \cite{DBLP:conf/semweb/RichardsonAD03}\vspace{0.1cm} \\
    {\em Maximum-flow} \\  \midrule
    \quad Tsukuba0 &110~594 & 506~546 & R & \cite{tsukuba} \\
    \quad Tsukuba2 &110~594 & 500~544 & R & \cite{tsukuba}\\
    \quad Venus0 & 166~224 & 787~946 & R & \cite{DBLP:journals/ijcv/ScharsteinS02} \\
    \quad Venus1 & 166~224 & 787~716 & R & \cite{DBLP:journals/ijcv/ScharsteinS02} \\
    \quad Sawtooth0 & 164~922 & 790~296 & R & \cite{DBLP:journals/ijcv/ScharsteinS02}\\
    \quad Sawtooth1 & 164~922 & 789~014 & R &
    \cite{DBLP:journals/ijcv/ScharsteinS02}\\
    \quad SimCells & 903~962 & 6~738~294 & S & \cite{DBLP:conf/cvpr/JensenDD20} \\
    \quad Cells & 3~582~102 & 31~537~228 & R &
    \cite{DBLP:conf/cvpr/JensenDD20} \\ \bottomrule
    \end{tabular}
\end{table}

\begin{table}
  \caption{Summary of the linear programs used for evaluation. All instances
  are from real problems.}
  \label{table:lp}
  \begin{tabular}{@{}lrrp{1cm}p{0.65cm}c@{}}
  \toprule
  Name        & Rows & Cols. & Non-zeros & Sol. time & Source \\ \midrule
  qap15 & 6~331 & 22~275 & 110~700 & 22~min & \cite{Mittelmann} \\
  nug08-3rd &  19~728   &  20~448  & 139~008 & 100~min & \cite{Mittelmann} \\
  supportcase10 &  10~713 & 1~429~098  & 4~287~094 & 31~min & \cite{Mittelmann}\\
  ex10 & 69~609 & 17~680 & 1~179~680 & 24~min & \cite{Mittelmann}\\ \bottomrule
  \end{tabular}
\end{table}

\begin{table}
  \caption{Runtime and compression ratios of quasi-stable coloring \textit{vs.}
  prior work (stable coloring~\cite{L_Staudt_NetworKit_A_Tool_2016,DBLP:journals/mst/BerkholzBG17}) for selected datasets.}
  \label{tab:runtimes}
    \begin{tabular}{@{}lccrrr@{}}
    \toprule
      Dataset & Max $q$ & Mean $q$ & Colors & Compression & Time \\ \midrule
      OpenFlights        &   \multicolumn{2}{c}{stable ($q=0$)}  &  2~637  & 1.29:1 & 150ms      \\
      & q = 64  & 15.8 & 9 & 380:1 &  10ms \\
      & q = 32  & 6.96 & 17 & 200:1 & 20ms       \\
      & q = 16  & 2.22 & 39 & 87:1 & 60ms       \\
      & q = 8  & 0.52 & 106 & 32:1 & 350ms     \\
      \midrule
      Epinions        &   \multicolumn{2}{c}{stable ($q=0$)}  &  53~068 & 1.42:1 &
      49s \\
      & q = 64  & 4.42 & 71  & 1~000:1 &  2.39s      \\
      & q = 32  & 1.17 & 144 & 526:1 &  8.95s      \\
      & q = 16  & 0.79 & 316  & 240:1 & 40.5s \\
      & q = 8  & 0.22 & 869 & 87:1 & 5m19s \\ \midrule
    DBLP    & \multicolumn{2}{c}{stable ($q=0$)} &  233~466 & 1.35:1  &   14m52s      \\
    & q = 64  & 11.94 & 21 & 15~000:1 &  2.28s      \\
    & q = 32  & 2.22 & 89 & 3~500:1 & 22.6s    \\
    & q = 16  & 0.39 & 373 & 850:1 & 6m39s     \\
    & q = 8  & 0.06 & 1513 & 210:1 & 2h38m     \\ \bottomrule
    \end{tabular}
  \end{table}

\subsection{End-to-end performance} 
\label{sec:eval:end2end}

In this section, we consider how well q-stable colors work as a function of
downstream performance. We evaluate the trade-off between accuracy and
speed when using the colorings for approximating linear-optimization,
maximum-flow, and centrality tasks. On all tasks, we compare against the
baseline of solving the problem directly on the graph or linear system.

For maximum-flow and linear-optimization tasks, we use the  
relative error as the performance metric. We define this as
$\max(v / \hat{v}, \hat{v} / v)$ for an actual, predicted $v, \hat{v}$ so that
$1.0$ is the ideal score. For betweenness centrality, the actual and
predicted scores
are compared pair-wise using Spearman's rank correlation coefficient, where
$1.0$ is also the ideal score. In all
experiments, the time reported is end-to-end
\textit{i.e.}, includes coloring the graph or matrix, computing the
reduced problem, and solving it. 

Figure~\ref{fig:speed-accuracy} illustrates
the trade-off for three task types across twenty datasets. Overall, accuracy can be exchanged
for speed favorably: in the average case, using a budget of $1\%$ of the
baseline (\textit{i.e.,} a $100\times$ speedup) results in an average error
within $12\%$ of the optimal value.
Figure~\ref{fig:color-accuracy} shows
the number of colors required to achieve the same accuracy. Across all
tasks, no more than 150 colors are
required to converge to an approximation. We observe a consistent diminishing-returns pattern: the
initial color refinement results in large gains in accuracy, but as more
and more colors are added the gains shrink in size. Comparing
the densities of the graphs (omitted for brevity) with the difficulty of coloring them, we find no
consistent trend. Next, we consider the tasks individually.

\paragraph{Maximum Flow}
We test our algorithm on problem instances defined by min-cut/max-flow
benchmarks~\cite{patrick_m_jensen_2021_4905882, waterloo-vision}.
 This involves computing maximum flows
over the eight flow networks. We compare against a baseline of computing the exact flow
using the \textit{push-relabel algorithm}, considered to be the
benchmark for max-flow~\cite{DBLP:conf/esa/Goldberg08}.

Figure~\ref{fig:speed-accuracy}(a) shows the results:
our approximation achieves an average geometric-mean error of $1.17$ 
while using less than $1\%$ of the time needed for
direct solution. At the same time, as outlined in
Figure~\ref{fig:color-accuracy}(a), this error is achieved using no more
than $35$ colors. Recall that the flow networks are composed of 100K-2M nodes.

We consider comparing with prior approximation algorithms. 
The state-of-the-part push-relabel algorithm for max-flow cannot be stopped
early, as it computes \textit{pre-flows} which are not valid flows and
violate the principle of flow conservation. Further, while linear-time
approximations have been developed in the theory~\cite{DBLP:conf/soda/KelnerLOS14}, these algorithms
remain slower than push-relabel algorithm in practice.
\paragraph{Linear programs}

Next, we evaluate the ability of quasi-stable coloring to approximate solutions
to linear systems of equations. We test on four real-world linear
programs, outlined in Table~\ref{table:lp}. These are relatively difficult
tasks: the easiest can be solved in 20
minutes while the most difficult requires about two hours for an exact
solution. 

$q$-stable colors provide a good speed-accuracy tradeoff, shown in 
Figure~\ref{fig:speed-accuracy}(b). On average,
a geometric-mean relative-error of $1.13$ is reached in under $0.5\%$ of the
direct runtime. Figure~\ref{fig:color-accuracy}(b) shows the number of colors required for
the results. Similarly to max-flow, a relatively small number of colors is
required for an accurate answer. Unlike other tasks, the error on LPs is not monotone.

Table~\ref{tab:prior} (bottom) compares our approximation
with early stopping the interior-point-method solver, the recommended
approach in practice~\cite{or-tools-time-limit}. We set a relative error
and solve until that bound is met. Q-stable coloring
outperforms the baseline runtime by $10^2\times$ on average and times out in only one
configuration (\textit{vs.} five).

\paragraph{Centrality}

Next, we test the utility of q-stable colorings in approximating
betweenness centrality. We measure the
approximation error using Spearman's rank correlation coefficient. We
 compare against the baseline of solving for exact centralities
using Brandes algorithm~\cite{brandes2001faster}, the algorithm with the
best asymptotic runtime.

Figure~\ref{fig:centrality} shows the
speedup-accuracy tradeoff on five medium-size datasets. On all datasets,
the approximation is favorable: using $1\%$ of the time of the direct
computation, it produces centralities with correlation $0.973$ to the
ground truth. Figure~\ref{fig:color-accuracy} outlines the number of colors required for
these results. We find that using 50 colors is sufficient to ensure a rank
correlation of greater than $0.948$, while 100 colors allow for $0.965$. 
Recall the graphs have 18--75K vertices. We exclude the datasets
\textsc{dblp} and larger because the baseline timed out after 16 hours.

We note a few trends. First, the speed-accuracy trade-off is
more favorable the larger the dataset. The largest dataset,
\texttt{epinions}, shows the steepest slope at the beginning; the dataset
with the fewest vertices, \texttt{Astrophysics}, shows the shallowest
slope. As with maximum-flow, the approximation error is found to be monotone over all
datasets: the more colors used, the better the correlation is.

Table~\ref{tab:prior} (top) compares the performance of our approximation
with prior works~\cite{DBLP:conf/wsdm/RiondatoK14}. By compressing all
nodes in the graph
rather than focusing on selecting paths to sample, we obtain
a $30\times$ better average runtime across various tasks and approximation budgets.

\subsection{Coloring Characteristics}
\label{sec:eval:color-char}

\paragraph{Coloring size} Table~\ref{tab:runtimes} compares the
q-stable coloring of three datasets with stable coloring.
Stable coloring results in compressed graphs with $70\%$-$78\%$
of the full graph size. We find that small maximum values of $q$, such
as $q=8$ result in an order-of-magnitude improvement in the compression
ratios over stable coloring. Moderate maximum values of $q$, such as $q=16$ result
in a two or more orders-of-magnitude improvement. While the choice of $q$
caps the worst-case degree error, the average errors are much smaller,
on average $<1.0$ on all datasets: less than one differing edge per color.

\paragraph{Color distribution} Unlike stable coloring, single-element partitions do not
dominate any dataset. For example, on  \texttt{cells}, \texttt{DBLP}, 
\texttt{nug08-3rd} and \texttt{epinions}, the median partition
contains 6, 14, 56, 206 nodes in
each dataset respectively. This evidences
 the ability of $q$-stable colors to
avoid stable-coloring-like single-color partitions.

\begin{table}
  \caption{Characteristics of the constraint matrix for some compressed linear programs.}
  \label{table:compressed-lp}
  \begin{tabular}{@{}lp{0.75cm}rrp{0.75cm}p{1cm}p{0.75cm}@{}}
  \toprule
  Dataset     & Colors  & Rows & Cols. & Non-zeros & Comp-ression ratio &
  Rel. error \\ \midrule
  qap15 & 10 & 4 & 7  & 11 & $10^4$  & 18.91 \\
   & 50 & 27 & 24  & 179 & $10^3$  & 1.45 \\
   & 100 & 52 & 49  & 478 & $10^2$  & 1.05 \\ \midrule
  nug08-3rd & 5 & 3 & 3 & 5 & $10^4$ & 8.19  \\
   & 50 & 30 & 21 & 254 & $10^3$  & 1.45 \\
   & 100 & 61 & 40 & 930 & $10^3$ & 1.45 \\ \midrule
  supportcase10 & 5 &  3 & 3 & 4 & $10^5$ & $10^{10}$  \\
  & 50 &  30 & 21 & 131 & $10^3$ & 1.39 \\
  & 100 &  62 & 39  & 368 & $10^3$ & 1.51 \\\midrule
  ex10 & 5 & 5 & 1  & 4 & $10^6$ & 5.28 \\
  & 50 &  25 & 26 & 347 & $10^3$ & 1.02 \\
  & 100 & 51 & 50  & 864  & $10^3$ & 1.02 \\ \bottomrule
  \end{tabular}
\end{table}

\paragraph{Compression ratios}

Table~\ref{table:compressed-lp} shows the compression ratios enabled by
using quasi-stable colors on linear programs. A maximum compression of $10^6\times$
is recorded, but corresponds to a large error of 5.28. Typical space savings of
a ratio of
$10^2$-$10^3$ while maintaining a geometric mean error of $1.23$.
The outlier error of $10^{10}$ on \texttt{supportcase10}
is explained by the measured q-error of $10^{7}$ when only 5 colors are
used--this sharply decreases with more colors.

\subsection{Algorithm Properties}
\label{sec:evaluation:algorithm}
\begin{table}
  \caption{Average latency and responsiveness of the \textsc{Rothko} algorithm across task types.}
  \label{table:response}
  \begin{tabular}{@{}lp{1.5cm}p{1.35cm}p{1.5cm}@{}}
  \toprule
  Task        &  Time-to-first-result & Update frequency & Time to converge \\ \midrule
  Linear opt. & 560~ms & 2.71~s & 72.2~s \\
  Max-flow & 845~ms & 1.57~s  & 21.0~s  \\
  Centrality & 32~ms  & 1.60~s & 5.88~s \\ \bottomrule
  \end{tabular}
\end{table}

\paragraph{Runtime} Table~\ref{tab:runtimes} compares runtime against the state-of-the-art
stable-coloring algorithm~\cite{DBLP:journals/mst/BerkholzBG17} with complexity
$\mathcal{O}((n+m)\log n)$. \textsc{Rothko}'s runtime is competitive against this highly
optimized implementation~\cite{L_Staudt_NetworKit_A_Tool_2016}, with an order-of-magnitude better compression.

\paragraph{Responsiveness}
Because of the progressive nature of the \textsc{Rothko} algorithm, an
initial prediction is produced promptly. Table~\ref{table:response}
outlines the latency and responsiveness metrics. The first prediction
occurs within 480~ms on average, with centrality tasks having the consistently
lowest latency and max-flow the highest. Average latency is
strongly influenced by outliers, such as \texttt{cells} with 6.5s of
latency. Further, the algorithm
iterates well, with a new color computed every $1.96s$ on average. 
The time taken for the subtasks varies: for max-flow and linear-program
problems, the coloring step dominates, using $>99.9\%$ of the
runtime on  the measured datasets. For centrality the
solving step dominates, taking up $68\%$--$94\%$
of the runtime.

\paragraph{Robustness}
We compare the robustness of q-colors to graph perturbations against that
of stable coloring. We construct a synthetic graph $|V|=1000,E=|21~600|$ with a
compact, 100-color stable
coloring. 
Then in Figure~\ref{fig:graphs-stable-coloring}, a small number of edges is added at
random. The initial stable coloring has a compression ratio of $10\times$.
Perturbing $1.5\%$ of edges causes the stable coloring to degrade to a
compression ratio of 75\% (750 nodes, down from 1000), with a majority of nodes given a unique color.
Computing a $q$-stable color ($q=4$), the compression
ratio can be maintained at $6.5\times$ with the same perturbation.

\section{Conclusion}

\label{sec:conclusion}

We have introduced quasi-stable colorings, an approach for vertex classification
that allows for the lossy compression of graphs. By developing an
approximate version of stable coloring, we are able to practically color
real-world graphs. We show the ability of these colorings as
approximations of max-flow/min-cut, linear optimization and betweenness
centrality and prove their error bounds. Discovering that the construction
of maximal q-stable colorings is
NP-hard, we develop a heuristic-based algorithm to efficiently compute them. We
empirically evaluate the characteristics and approximation utility of
quasi-stable colors; validating their practicality on wide range of
real datasets and tasks.

\begin{acks}
    This project was partially supported by NSF IIS 1907997 and NSF-BSF 2109922.
\end{acks}
\appendix

\section{Appendix}

\label{sec:appendix}

\paragraph*{Proof of Theorem~\ref{th:approx:lp}}

\label{app:proof:th:approx:lp}

We prove here Theorem~\ref{th:approx:lp}.  In general, it is well
known that $\opt(A,b,c)$ is a continuous function in $A, b, c$.  We
prove here a stronger statement: if $A,b,c$ is well behaved (see
Sec.~\ref{sec:linear:optimization}), then the function is Lipschitz
continuous in $b, c$.

\begin{lemma} \label{lemma:lp:lipschitz} Given $A, b, c$ as above,
  there exists $q_0 > 0$ such that, forall $u \in \R^m, v \in \R^n$,
  if $||u||_\infty \leq q_0$ and $||v||_\infty \leq q_0$, then
  $|\opt(A, b+u, c+v) - \opt(A,b,c)| = O(||u||_\infty+||v||_\infty)$.
\end{lemma}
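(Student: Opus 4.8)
The plan is to establish a local two-sided Lipschitz bound by splitting the joint perturbation into two one-parameter pieces via the triangle inequality,
\[
|\opt(A,b+u,c+v)-\opt(A,b,c)| \le |\opt(A,b+u,c+v)-\opt(A,b+u,c)| + |\opt(A,b+u,c)-\opt(A,b,c)|,
\]
and bounding each term separately. The whole argument rests on one structural fact: an LP value is a finite \emph{maximum} of linear functions indexed by the vertices of the primal feasible polyhedron, and dually a finite \emph{minimum} of linear functions indexed by the vertices of the dual feasible polyhedron; and a max (or min) of $L$-Lipschitz functions is again $L$-Lipschitz.

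First I would handle the $b$-perturbation at the fixed cost $c$ (the second term) using duality. Let $Q := \setof{y \ge 0}{A^T y \ge c}$ be the dual-feasible region, which crucially does \emph{not} depend on $b$. Choosing $q_0$ small enough, using well-behavedness, that both primal and dual stay feasible and bounded for all $\|u\|_\infty,\|v\|_\infty \le q_0$, strong duality gives $\opt(A,b',c) = \min_{y \in Q} (b')^T y$ for $b' \in \set{b, b+u}$. Since $Q \subseteq \R^m_{\ge 0}$ is pointed, this minimum is attained at a vertex of $Q$, so $\opt(A,\cdot,c)$ is a minimum over the finitely many linear maps $b' \mapsto (b')^T y$ with $y \in \mathrm{vert}(Q)$. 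Hence the second term is at most $L_b\,\|u\|_\infty$ with $L_b := \max_{y \in \mathrm{vert}(Q)} \|y\|_1$, a constant depending only on $A, c$.

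Next I would handle the $c$-perturbation at the fixed right-hand side $b+u$ (the first term) using the primal. With $P_u := \setof{x \ge 0}{Ax \le b+u}$, well-behavedness makes $\opt(A,b+u,c')$ finite for $c' \in \set{c, c+v}$, so it equals $\max_{x \in \mathrm{vert}(P_u)} (c')^T x$, and the same max-of-Lipschitz-functions argument bounds the first term by $\bigl(\max_{x \in \mathrm{vert}(P_u)} \|x\|_1\bigr)\,\|v\|_\infty$. The main obstacle is exactly here: unlike $Q$, the polyhedron $P_u$ and its vertex set \emph{move} with $u$, so I must show the vertices stay uniformly bounded over the whole ball $\|u\|_\infty \le q_0$. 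I would do this with a finite-basis argument: every vertex of $P_u$ is the unique solution of a system of $n$ linearly independent active constraints chosen among the rows of $Ax \le b+u$ and the bounds $x \ge 0$; writing such a system as $A_B x = d_B(u)$, where $d_B(u)$ is affine in $u$ (entries coming from the bounds contribute $0$), the vertex is $x = A_B^{-1} d_B(u)$, an affine function of $u$. There are finitely many candidate bases $B$, and on the compact ball $\|u\|_\infty \le q_0$ each affine image is bounded; taking the maximum over the finitely many bases yields a single constant $M$, depending only on $A, b, q_0$, with $\max_{x \in \mathrm{vert}(P_u)} \|x\|_1 \le M$ for all admissible $u$.

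Combining the two estimates gives
\[
|\opt(A,b+u,c+v)-\opt(A,b,c)| \le M\,\|v\|_\infty + L_b\,\|u\|_\infty \le \max(M, L_b)\,(\|u\|_\infty + \|v\|_\infty),
\]
which is the claimed $O(\|u\|_\infty + \|v\|_\infty)$ bound, with a constant depending only on $A, b, c$ (since $q_0$ is itself chosen as a function of $A, b, c$). An alternative would be to invoke general parametric-LP theory to assert local Lipschitz continuity of the value function directly, but I prefer the self-contained duality-plus-vertices argument above, since it also exposes the explicit constant needed downstream in Theorem~\ref{th:approx:lp}.
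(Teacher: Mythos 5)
Your proof is correct and follows essentially the same route as the paper's: the same triangle-inequality split of the joint perturbation, duality to handle the $b$-perturbation (since the dual feasible region is independent of $b$), and vertex attainment plus uniform boundedness of the primal vertices over the compact ball $\|u\|_\infty \le q_0$ to handle the $c$-perturbation. The only difference is one of rigor, in your favor: where the paper simply asserts that $\sup_u \|x_i'\|_1$ is finite by compactness, your finite-basis argument (each vertex being $A_B^{-1} d_B(u)$, affine in $u$, over finitely many bases $B$) supplies the justification.
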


\begin{proof}
  Recall that the optimal solution $x^*$ can always be chosen to be a
  vertex of the polytope defined by the LP.  More precisely, consider
  the $m+n$ inequality constraints $Ax \leq b$, $x \geq 0$.  Choose
  any $n$ of them and convert them to equalities; if they uniquely
  define $x$, then we call $x$ a {\em candidate solution}, and we
  denote by $x_1, \ldots, x_N$ all candidate solutions.  Let $x_i$ be
  candidate solution that is feasible and optimal for $A,b,c$, and let
  $x_j$ be the candidate solution that is feasible and optimal for
  $A,b,c+v$ respectively.  Then
  $|\opt(A,b,c+v)-\opt(A,b,c)|=|v^T(x_j - x_i)|\leq ||v||_\infty
  ||x_j-x_i||_1 \leq O(q)$, where the constant in $O(-)$ is
  $2\max_i||x_i||_1$.  By applying the same argument to the dual LP we
  obtain $|\opt(A,b+u,c) - \opt(A,b,c)| = O(q)$.  Returning to the
  primal LP, we notice that if we replace $b$ with $b+u$, then the
  candidate solutions $x_i$ will change to some $x_i'$; since $u$
  ranges over a compact set, $\sup_u ||x_i'||_1$ exists and is finite,
  for all $i=1,N$, which implies
  $|\opt(A,b+u,c+v)-\opt(A,b+u,c)| = O(q)$.  Thus,
  $|\opt(A,b+u,c+v)-\opt(A,b,c)| = O(q)$ as required.
\end{proof}

Using the lemma, we can now prove Theorem~\ref{th:approx:lp}.  Define:
{\small
\begin{align}
  \extend{U}(r,i) \defeq & \frac{\bm 1_{i \in P_r}}{\sqrt{|P_r|}}
& \extend{V}(s,j) \defeq & \frac{\bm 1_{j \in Q_s}}{\sqrt{|Q_s|}} \label{eq:def:u:v}
\end{align}
}
where $\bm 1_\pi$ is the indicator function of a predicate $\pi$,
equal 1 when $\pi$ is true, and equal 0 otherwise.  $\extend{U}$ and
$\extend{V}$ represent mappings between the original LP and the
reduced LP, and we will show that they satisfy a relaxed version of
Eq.~\eqref{eq:def:u:v:grohe}.  Observe that the last row and last
column of both $\extend{U}$ and $\extend{V}$ are $0,0,\ldots,0,1$, and
denote by $U, V$ (without boldface) the matrices obtained by removing
the last row and last column.  Then $\hat A = U A V^T$, $\hat b = Ub$,
and $\hat c^T = c^T V^T$ (see their definitions in
Eq.~\eqref{eq:def:hat:a:b:c}).  Next, we define the following matrices
$\extend{D}, \extend{E}$, which capture the error introduced by the
mapping from $A, b, c$ to $\hat A, \hat b, \hat c$.  We also show them
as block matrices, by exposing the last row and last column:
{\footnotesize
\begin{align}
  \extend{D}\defeq \extend{A}\extend{V}^T-\extend{U}^T \hat{\extend{A}} = &&\extend{E} \defeq \extend{U}\extend{A} - \hat{\extend{A}} \extend{V} = \nonumber \\
                    \left(
                    \begin{array}{    c    |c}
                      & d_1:= \\
                                        & b \\
                      D:=AV^T-U^T\hat A & - \\
                                        & U^T\hat b \\ \hline
                      c^TV^T-\hat c^T= 0 & 0
                    \end{array}
                          \right)
                   &&   \left(
                     \begin{array}{    c    |c}
                                        & Ub \\
                                         & - \\
                        E:= UA - \hat A V & \hat b \\
                                          & =0 \\ \hline
                        e_2^T:=c^T - \hat c^T V & 0
                      \end{array}
                      \right)
 \label{eq:d:and:e}
\end{align}
}
We show that the error matrices are small:
\begin{align}
  |\extend{D}(s,i)| \leq & \frac{q}{\sqrt{|Q_s|}} & |\extend{E}(r,j)|\leq \frac{q}{\sqrt{|P_r|}}\label{eq:d:e:small}
\end{align}
We only show the first inequality, the second is identical:
{ \footnotesize
\begin{align*}
  \extend{D}(s,i)= & \sum_j \frac{\extend{A}(i,j)\bm 1_{j \in  Q_s}}{\sqrt{|Q_s|}}
                     -\sum_r \frac{\bm 1_{i \in P_r}\hat{\extend{A}}(r,s)}{\sqrt{|P_r|}}\\
  = & \frac{\extend{A}(i,Q_s)}{\sqrt{|Q_s|}}-\frac{\extend{A}(P_r,Q_s)}{P_r\cdot \sqrt{|Q_s|}}
      = \frac{1}{\sqrt{|Q_s|}}\left(\extend{A}(i,Q_s)-\frac{\extend{A}(P_r,Q_s)}{|P_r|}\right)
\end{align*}
}
where in the last line $r$ is the unique color that contains $i$.  The
quantity $\extend{A}(i,Q_s)$ represents the total weight from $i$ to
the color $Q_s$, while $\extend{A}(P_r,Q_s)/|P_r|$ is the average of
this quantity over all $i \in P_r$.  Since the coloring is $q$-quasi
stable, this difference is bounded by $q$.

For any feasible solution $x$ to the LP~\eqref{eq:sys:1}, define
$\hat x \defeq Vx$.  Since $Ax \leq b$, we derive $UAx \leq Ub$, which
becomes $(\hat AV + E)x \leq \hat b$, or $\hat AVx \leq \hat b - Ex$.
Moreover, $c^Tx = (\hat c^T V + e_2^T) x = \hat c^T\hat x + e_2^Tx$.
It follows that
$\opt(A,b,c) \leq \opt(\hat A, \hat b - Ex, \hat c) + e_2^T x$.  We
set $x:=x^*$ (an optimal solution to the LP) and observe that
Eq.~\eqref{eq:d:e:small} implies $||Ex^*||_1 \leq q ||x^*||_1$, and
$||e_2^Tx^*||_1 \leq q ||x^*||_1$.  Therefore,
Lemma~\ref{lemma:lp:lipschitz} implies
$\opt(\hat A, \hat b - Ex, \hat c) \leq \opt(\hat A, \hat b, \hat c) +
q \Delta$, for some constant $\Delta$.  We have proven that
$\opt(A,b,c) \leq \opt(\hat A, \hat b, \hat c) + O(q)$.

Conversely, for $\hat x$ any feasible solution to~\eqref{eq:sys:2},
define $x \defeq V^T \hat x$.  Since $\hat A \hat x \leq \hat b$, we
derive $U^T \hat A \hat x \leq U^T \hat b$, or
$(A V^T - D) \hat x \leq b - d_1$, which we rearrange as
$A x \leq b + (D \hat x - d_1)$.  Moreover,
$c^T x = c^T V^T \hat x = \hat c^T \hat x$.  It follows that
$\opt(\hat A, \hat b, \hat c) \leq \opt(A, b+ (D \hat x - d_1), c)$.
We set $\hat x := \hat x^*$ (an optimal solution to the reduced LP),
and observe that $||D \hat x^* - d_1||_1 = O(q)$.  Therefore,
Lemma~\ref{lemma:lp:lipschitz} implies
$\opt(A, b+ (D \hat x - d_1), c) \leq \opt(A,b,c) +q \Delta$, for some
constant $\Delta$.  This completes the proof of the theorem.

\paragraph*{Proof of Lemma~\ref{lemma:uniform}}

\label{app:proof:lemma:uniform}

We prove here Lemma~\ref{lemma:uniform}.  Extend $G$ to a network by
adding two nodes $s, t$ and setting $c(s,x) \defeq F/|X|$ and
$c(y,t) \defeq F/|Y|$ for all $x \in X, y \in Y$.  We claim that this
network admits a flow of value $F$.  The claim implies that the flow
is uniform, since all edges $(s,x)$ and $(y,t)$ must have a flow up to
their capacity.  To prove the claim it suffices to show that every cut
in the network has a capacity $\geq F$.  Let $C$ be any cut.  Define
the sets:
{\small
\begin{align*}
  S \defeq & \setof{x}{x \in X, (s,x)\not\in C}  & T \defeq & \setof{y}{y \in Y, (y,t)\not\in C}
\end{align*}
}
The cut must contain all edges $(x,y)$ with $x \in S$, $y \in T$,
hence its capacity is:
{\footnotesize
  \begin{align*}
    c(S,T) + \left(|X|-|S|\right)\frac{F}{|X|} &+ \left(|Y|-|T|\right)\frac{F}{|Y|} \\
 =  \left(c(S,T)+F - |S|\frac{F}{|X|} - |T|\frac{F}{|Y|}\right)+F
&\geq \left(c(S,T) + F - a\cdot |S|-b\cdot |T|\right) + F \geq F. \quad \square
  \end{align*}
}
\paragraph*{Proof of Theorem~\ref{th:centrality}}

\label{app:proof:th:centrality}

We write $d(a,b)$ for the length of the shortest path from $a$ to $b$
in the graph.  We claim that, for all numbers $M \geq 0$ and
$d \geq 0$, the following formula $\Phi_{M,d}$ can be expressed in
$C^3$:
{\footnotesize
\begin{align*}
  \Phi_{M,d}(s,v) \defeq & (d(s,v)=d) \wedge (\sigma(s,v)\geq M)
\end{align*}
}
The claim implies the theorem, because we can write $g(v)$ as follows.
Notice that $\sigma(s,t \mid v) = \sigma(s,v)\sigma(v,t)$ when
$d(s,t)=d(s,v)+d(v,t)$ and $\sigma(s,t \mid v) =0$ otherwise.  Then:
{\footnotesize
  \begin{align*}
    g(v) = & \sum_{M_1,M_2,M,d_1,d_2,s,t} \left\{\frac{M_1M_2}{M}\mid \Psi_{M_1,d_1}(s,v)\wedge \Psi_{M_2,d_2}(v,t) \wedge \Psi_{M,d_1+d_2}(s,t)\right\}
  \end{align*}
}
where $\Psi_{M,d} \defeq \Phi_{M,d} \wedge \neg \Phi_{M+1,d}$ asserts
that $\sigma = M$.  The claim implies that, if $u,v$ have the same
2-WL color, then, by Theorem~\ref{th:cai:fuhrer:immerman}, we have
$\Phi_{M,d}(s,u)\equiv \Phi_{M,d}(s,v)$ for all $M, d, s$, and
similarly $\Phi_{M,d}(u,t)\equiv \Phi_{M,d}(v,t)$, which implies
$g(v)=g(u)$.  It remains to prove the claim.  Recall that, for all
$d \geq 0$, the formula $\Pi_{\leq d}(x,y)$ saying ``there exists a
path of length $\leq d$ from $x$ to $y$'' is expressible in $C^3$.
For example,
$\Pi_{\leq 4}(x,y) = \exists z(E(x,z) \wedge \exists x(E(z,x) \wedge
\exists z(E(x,z) \wedge E(z,y))))$.  Then
$\Pi_{=d}(x,y) \defeq \Pi_{\leq d}(x,y)\wedge \neg \Pi_{\leq  (d-1)}(x,y)$ 
asserts that $d(x,y) =d$.  Next, we notice that:
{\footnotesize
\begin{align}
  \sigma(s,v) = \sum_{w: E(w,v) \wedge (d(s,v)=d(s,w)+1)}\sigma(s,w) \label{eq:sigma:recursive}
\end{align}
}
If $n$ is the number of nodes in the graph, then for each
$\ell = 1, n$ we denote by $\mathcal{M}_\ell$ the set of all strictly
increasing $\ell$-tuples of natural numbers
$\bm M=(M_1, \ldots, M_\ell)$, where
$0<M_1 < M_2 < \cdots < M_\ell \leq M$.  Similarly, denote by
$\mathcal{C}_\ell$ the set of all $\ell$-tuples of natural numbers
$\bm c = (c_1, \ldots, c_\ell)$, where $0 < c_i \leq M$.  Then:
{\footnotesize
  \begin{align*}
    \Phi_{M,d}(s,w) = & \bigvee_{\begin{array}{c}\bm M \in \mathcal{M}; \bm c \in \mathcal{C}\\ \sum_i c_i M_i \geq M\end{array}} \bigwedge_{i=1,\ell} \exists^{\geq c_i}w: \Pi_{=(d-1)}(s,w)\wedge \Psi_{M_i,d-1}(s,w)
  \end{align*}
}
In other words, for every combination of numbers $M_i, c_i$ such that
$\sum_i c_i M_i \geq M$, the formula checks if, for each $i$, there
exists at least $c_i$ parents $w$ at distance $d-1$ from $s$, where
$\sigma(s,w) = M_i$.  We prove that the formula is correct.  Suppose
the RHS is true.  Then, for each $i$, there are at least $c_i$
distinct nodes $w$ that satisfy the formula
$\Pi_{=(d-1)}(s,w)\wedge \Psi_{M_i,d-1}(s,w)$.  For each such $w$, we
have $\sigma(s,w) = M_i$, and therefore their contribution to the sum
in~\eqref{eq:sigma:recursive} is $c_i M_i$.  Since the numbers
$M_1, \ldots, M_\ell$ are distinct, it follows that the set of nodes
$w$ associated to distinct values $i$ are disjoint, hence their total
contribution to~\eqref{eq:sigma:recursive} is $\sum_i c_i M_i$, which
is $\geq M$ as required.

\bibliographystyle{ACM-Reference-Format}
\balance
\bibliography{bibliography}

\end{document}